\newcommand{\Cosh}{\mathrm{Cosh}}
\newcommand{\Tanh}{\mathrm{Tanh}}
\newcommand{\Sech}{\mathrm{Sech}}
\newcommand{\Ln}{\mathrm{Ln}}
\newcommand{\diag}{\mathrm{diag}}
\newtheorem{theorem}{Theorem}
\newtheorem{proof}{Proof}
\newtheorem{remark}{Remark}
\begin{document}
\begin{frontmatter}

\title{Payload trajectory tracking control for aerial transportation systems with cable length online optimization\thanksref{footnoteinfo}} 

\thanks[footnoteinfo]{
 This work was supported in part by National Natural Science Foundation of China under  Grant 623B2054, Grant 62273187 and Grant 62233011, in part by Natural Science Foundation of Tianjin under Grant 23JCQNJC01930, and in part by the Key Technologies R \& D Program of Tianjin under Grant 23YFZCSN00060.}
\thanks[Corresponding]{Corresponding author at: Institute of Robotics and Automatic Information System, Nankai University, Tianjin 300350, China. Tel.: +86 22 23505706; fax: +86 22 23500172.}
\author[Nankai,Shenzhen]{Hai Yu}\ead{yuhai@mail.nankai.edu.cn},    
\author[Nankai,Shenzhen]{Zhichao Yang}\ead{yangzc@mail.nankai.edu.cn},               
\author[Nankai,Shenzhen]{Wei He}\ead{howei@mail.nankai.edu.cn},  
\author[Nankai,Shenzhen]{Jianda Han}\ead{hanjianda@nankai.edu.cn},
\author[Nankai,Shenzhen]{Yongchun Fang}\ead{fangyc@nankai.edu.cn},
\author[Nankai,Shenzhen]{Xiao Liang\thanksref{Corresponding}}\ead{liangx@nankai.edu.cn}
\address[Nankai]{Institute of Robotics and Automatic Information System, College of Artificial Intelligence,
and Tianjin Key Laboratory of Intelligent Robotics, Nankai University, Tianjin 300350, China}  
\address[Shenzhen]{Institute of Intelligence Technology and Robotic Systems,
Shenzhen Research Institute of Nankai University, Shenzhen 518083, China}        

\begin{keyword}                           
Aerial transportation systems; payload trajectory tracking; cable-length optimization; nonlinear control.               
\end{keyword}                             

\begin{abstract}                          
Cable-suspended aerial transportation systems are employed extensively across various industries. 
The capability to flexibly adjust the relative position between the multirotor and the payload has spurred   growing interest in the system equipped with variable-length cable, promising broader application potential.  Compared to systems with fixed-length cables,  introducing the variable-length cable adds a new degree of freedom. However, it also results in increased nonlinearity and more complex dynamic coupling among the multirotor, the cable and the payload, posing significant challenges in control design. 
This paper introduces a backstepping control strategy  tailored for aerial transportation systems with variable-length cable, designed to precisely track the payload trajectory while dynamically adjusting cable length. 
Then, a cable length generator has been developed that achieves online optimization of the cable length while satisfying state constraints, thus balancing the multirotor's motion and cable length changes without the need for manual trajectory planning.
The asymptotic stability of the closed-loop system is guaranteed through Lyapunov techniques and the growth restriction condition. 
Finally, simulation results confirm the efficacy of the proposed method in managing trajectory tracking and cable length adjustments effectively.
\end{abstract}

\end{frontmatter}
 \renewcommand\baselinestretch{0.65}
\section{Introduction}
With the advancement of electronic technologies and control algorithms, unmanned aerial vehicles (UAVs) \cite{yang2024robust,kong2025robust,yang2023finite,meng2024physical,lv2022fixed} have become essential tools across various industries, including agriculture, logistics, and rescue operations, etc.
As one of the applications, using multirotor UAVs for suspended cargo transportation not only significantly enhances the efficiency of material delivery but also facilitates construction and rescue operations in environments inaccessible to ground vehicles. 
As a result, numerous studies have been conducted on cable-suspended aerial transportation systems in recent years.

Conventionally, the primary research objective is to achieve effective payload delivery across various scenarios through control and planning methods. In certain situations, inducing large swing angles of the payload is desirable, particularly during obstacle avoidance or while maneuvering through narrow spaces \cite{tang2018aggressive,yu2022aggressive,wang2024impact}. 
For instance, Tang \emph{et al.} \cite{tang2018aggressive} employ a downward-facing camera to estimate and capture the motion of the payload, facilitating rapid passage through slalom courses  by generating large payload swing angles.
Similarly, Yu \emph{et al.} \cite{yu2022aggressive}, integrate constraints on cable direction into their trajectory generator, creating an aggressive payload swing trajectory that enables efficient window crossing. 
Wang \emph{et al.}  \cite{wang2024impact}  have developed an impact-aware planning and control framework that combines agile flight with hybrid motion modes, allowing the aerial transportation system to navigate through narrow circular gates. 
Conversely,  for the purposes of transportation safety and payload protection, suppressing large payload swings is often necessary  \cite{xian2019online,lee2020antisway,yu2023reduced}. 
Xian \emph{et al.} \cite{xian2019online} design a trajectory planning strategy  that combines target positioning and antiswing components to suppress payload swings without iterative optimizations. 
Analogously, Lee \emph{et al.} \cite{lee2020antisway} introduce a dynamically feasible trajectory generator paired with an anti-swing tracking controller, which effectively dampens payload oscillations while allowing for transient aggressive motions. 
In addition, a control framework that tracks a virtual point along the cable, effectively handling constant external disturbances and mitigating payload oscillations, is implemented by Yu \emph{et al.} \cite{yu2023reduced}.
Furthermore, other studies have prioritized payload trajectory tracking over swing control \cite{cabecinhas2019trajectory,kong2024dynamic}. 
By using the full dynamics of the aerial transportation system, a backstepping control scheme is designed in \cite{cabecinhas2019trajectory} to track the desired position of the point-mass payload.
Besides, Kong \emph{et al.} \cite{kong2024dynamic} develop an adaptive backstepping control scheme for systems with unknown payload mass, setting pre-specified performance specifications for both payload position and cable direction.

However, most existing works focus on systems with fixed-length cables, which indicates that the payload's hoisting/lowering motion can only be achieved by changing the motion of the multirotor, making it difficult to complete certain tasks, such as navigating through narrow tunnels. Although some fixed-length cable systems can swing up the payload through narrow gaps using certain planning methods, it is not qualified for such tasks if the cave becomes longer.
Moreover,  fixed-length cable systems are not suitable for tasks that require the multirotor to maintain a certain distance from the payload, such as payload release and suspension missions involving human operators. Keeping the multirotor at a designated height while extending the cable length can enhance personnel safety. 
To address these limitations, some recent works have employed the variable-length cable for payload delivery, 
thereby significantly enhancing the system's flexibility. 
Specifically, the control objectives can be divided into two categories: one focusing on multirotor motion tracking \cite{liang2022unmanned,yu2023adaptive,yu2024visual,huang2023suppressing} and the other on payload motion tracking \cite{zeng2019geometric}. 
By mounting a motor beneath the multirotor, the cable length can be dynamically adjusted during flight, facilitating smooth navigation through narrow spaces and enabling sample collection tasks, as demonstrated by \cite{liang2022unmanned}. 
Furthermore, an adaptive control scheme is designed for the aerial transportation system to realize payload landing on mobile platforms by eliminating payload swing in \cite{yu2023adaptive}, and a visual-based servoing control method is proposed in \cite{yu2024visual} to improve the autonomy of payload transportation.
To expedite the convergence of payload swing angles, \cite{huang2023suppressing} introduces an enhanced coupling signal enabling the multirotor to track a desired point.

In summary, these efforts primarily focus on the motion tracking of the multirotor and the cable length, with particular emphasis on suppressing payload oscillations. While specific operations, such as the precise release of payloads at designated target points, necessitate direct manipulation of the payload. The traditional method \cite{yu2023adaptive} that indirectly controls payload position by suppressing its oscillations through adjustments to the multirotor and cable length exhibits limitations in task execution efficiency. 
To achieve direct payload position tracking, a geometric control scheme is designed in \cite{zeng2019geometric}, which  integrates the payload position and cable length into a single control unit. 
However, this approach requires predefined cable length trajectories, thus restricting the system's flexibility in task execution. 
For aerial transportation system with variable-length cable, the position of the payload is determined not only by the multirotor's motion but also by the length of the suspension cable. 
Actually, the same payload trajectory can be accomplished using different combinations of multirotor and cable length motions. 
Once the payload trajectory is established, it is necessary to consider the constraints related to multirotor and cable length motion  tailored to the specific requirements, to identify the optimal motion combination.
For instance, during the payload pickup and release phases, priority is given to extending the cable length to mitigate ground effect and ensure personnel safety. Conversely, during long-distance transportation phases, the focus shifts to multirotor movement to guarantee efficient transportation.
Currently, the coordination of multirotor motion with cable length variations to achieve precise payload trajectory tracking remains an open area of research.

To address these challenges, this paper presents a backstepping  controller accompanied by an online cable length generator for the aerial transportation system.
The developed control scheme  is capable of achieving  desired payload position tracking while simultaneously adjusting the cable length. The generator is specifically  designed to  balance the variations in cable length with multirotor movements. 
The main contributions of this paper can be summarized as follows:
\begin{enumerate}
\item  Although the variable cable length  increases the flexibility of aerial transportation systems, it also intensifies the nonlinearity and complex dynamics among the multirotor, cable and payload. By employing the backstepping procedure, a cascaded control scheme composed of four parts is designed to achieve payload trajectory tracking control, cable length and direction control, and multirotor attitude control. 
Without any linearization, the asymptotic stability of the closed-loop system is guaranteed through Lyapunov techniques and the growth restriction condition.

\item  After the payload trajectory is given, a cable length generator is designed to manage the motion of the multirotor and cable length based on state constraints, thereby eliminating the need for manually predefined cable trajectories. Notably, the developed control law is incorporated into the system model to determine the dynamic constraints of the generator, enabling real-time operation.

\end{enumerate}
The remainder of this paper is organized as follows: Section \ref{sec:dynamicsmodeling} presents the system modeling and problem statements. The controller design process is described in Section \ref{sec:control}, followed by the cable length generator in Section \ref{sec:generator}. Section \ref{sec:stability} provides the stability analysis, and Section \ref{sec:simulation_results} presents simulation results to demonstrate the performance of the proposed scheme. Finally, Section \ref{sec:conclusions} concludes the paper and outlines directions for future work.

\section{Problem Statement}\label{sec:dynamicsmodeling}
\subsection{Notations}
In this paper, vectors are distinguished by using bold letters. 
The special orthogonal group is defined as
$SO(3):=\{R\in\mathbb{R}^{3\times3}\mid R^\top R=I,\ \det R=1\}$, and its Lie algebra is
$\mathfrak{so}(3):=\{ S \in\mathbb{R}^{3\times3}\mid S^\top=-S\}$.
For $\bm a\in\mathbb{R}^3$, the hat map $\hat{\cdot}:\mathbb{R}^3\to\mathfrak{so}(3)$
is defined by $\hat{\bm a}\bm b=\bm a\times\bm b$ for all $\bm b\in\mathbb{R}^3$;
the vee map $(\cdot)^\vee:\mathfrak{so}(3)\to\mathbb{R}^3$ is its inverse.
\textcolor{blue}{The 2-sphere is given by $S^2:=\{\bm p\in\mathbb{R}^3\mid \bm p^\top\bm p=1\}$.} 
For a vector $\bm h \in \mathbb{R}^n$, define  $\Cosh(\bm h) = \left[\cosh(h_1),\ldots,\cosh(h_n) \right]^\top$,
 $\Tanh(\bm h) = \left[\tanh(h_1),\ldots,\tanh(h_n) \right]^\top$, 
 $\Sech(\bm h) = \left[\operatorname{sech}(h_1),\ldots,\operatorname{sech}(h_n) \right]^\top$, $\Ln(\bm h) = \left[\ln(h_1),\ldots,\ln(h_n) \right]^\top$, 
 and $\diag(\bm h) $ is the diagonal matrix with diagonal elements of $h_1,\ldots,h_n$.
 $\lambda_{\max}(\cdot) $ and $ \lambda_{\min}(\cdot)$ denote the maximum eigenvalue and minimum eigenvalue of a matrix, respectively. 
 For any vectors $\bm a,\bm b , \bm c \in \mathbb{R}^3$,  the following triple product identities are satisfied:
 $    \bm a \times (\bm b \times \bm c) = \bm b (\bm a^\top \bm c) - \bm c (\bm a^\top \bm b) $,
 $\bm a^\top(\bm b \times \bm c) = \bm b^\top (\bm c \times \bm a) = \bm c^\top (\bm a \times \bm b)
 $. For a unit vector $\bm p \in \mathbb{R}^3$ and an arbitrary vector $\bm a\in\mathbb{R}^3$, when  they are orthogonal to each other, i.e.,
 $\bm p^\top \bm a = 0$, one has
 $ -\hat{\bm p}^2\bm a = \bm a$.

\subsection{Dynamic Model}
\begin{figure}[!htp]
    \begin{center}
    \includegraphics[width=2.0in]{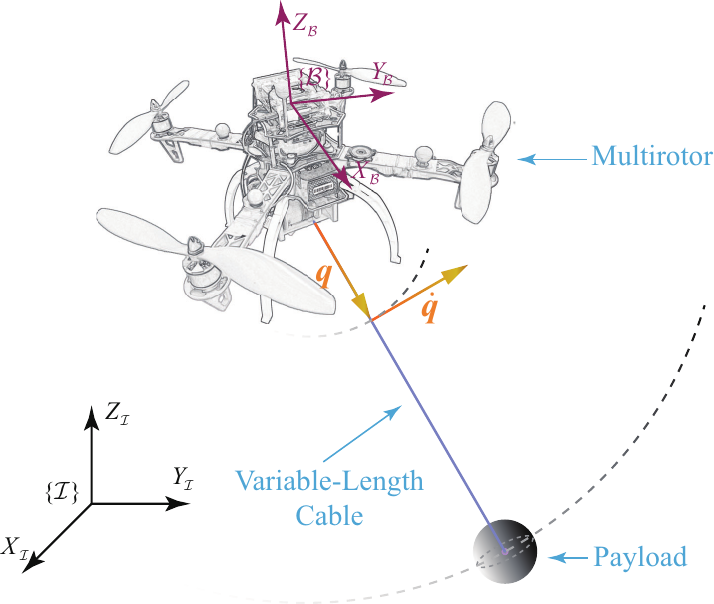}
    \caption{\small{Schematic of the aerial transportation system with variable-length cable.}}
    \label{fig:qload}
\end{center} 
\end{figure}

\begin{table}[t]
	\renewcommand{\arraystretch}{2.0}
	\caption{Symbols and Definitions of the System.}
	\centering
	\label{table:config}
	\resizebox{\columnwidth}{!}{
		\begin{tabular}{l l}
			\hline\hline \\[-4mm]
			\multicolumn{1}{l}{Symbol} & \multicolumn{1}{l}{Definition}  \\[0.2ex] \hline
            $m_Q,m_L \in \mathbb{R}$ & Mass of  multirotor and payload\\
            $g\in\mathbb{R}$ & Gravitational acceleration \\
            $L\in \mathbb{R}$ & Cable length \\
            $J \in\mathbb{R}^{3\times3}$ &Multirotor's moment of inertia  \\
            $R\in SO(3)$  & Rotation matrix from $\{\mathcal{B}\}$ to  $\{\mathcal{I}\}$ \\
            $\bm x_L,\bm v_L \in  \mathbb{R}^3$ &  Payload position and velocity \\
            $\bm x_Q,\bm v_Q \in  \mathbb{R}^3$ &  Multirotor position and velocity \\
            $\bm q \in S^2$ & Direction of the cable\\
            $\bm \omega \in \mathbb{R}^3 $ & Angular velocity of the cable\\
            $\bm\Omega \in \mathbb{R}^{3}$ & Angular velocity of multirotor in  frame $\{\mathcal{B}\}$\\
            $f_L\in \mathbb{R}$&  Payload hoisting/lowering force\\
            $f\in \mathbb{R}$ & Applied thrust generated by multirotor \\
            $\bm \tau \in \mathbb{R}^3$ &  Torque generated by multirotor\\
            \textcolor{blue}{$\bm e_3 \!=\! [0, 0, 1]^\top \in \mathbb{R}^3$ } & \textcolor{blue}{Unit vector along $z$-axis}\\
			\hline\hline
		\end{tabular}
	}
\end{table}
Fig. \ref{fig:qload} provides the schematic diagram of the aerial transportation system with variable-length cable, where ${\mathcal{I}}$ denotes the right-hand inertia frame and ${\mathcal{B}}$ represents the body-fixed frame.
The symbols and definitions utilized throughout the paper are summarized in Table \ref{table:config}. \textcolor{blue}{Unlike previous works, the cable length is modeled as a system state, rather than a constant, and can be actively extended or retracted via motor-driven mechanisms.  Therefore, the control inputs of the system comprise not only the thrust $f\in\mathbb{R}$ and torque $\bm \tau \in \mathbb{R}^3$ generated by the multirotor, but also the force $f_L\in\mathbb{R}$ for hoisting and lowering the payload produced by the cable-length regulation mechanism.}
\textcolor{blue}{
The system kinematics are given by \cite{sreenath2013geometric, tang2018aggressive, yu2023reduced}:
\begin{align}
    \label{Model:Qload1}
    \dot{\bm x}_{L}&=\bm v_{L}, \\
    \label{Model:Qload2}
    {\dot{\bm q}}&=\bm \omega\times \bm q, \\
    \label{Model:Qload3}
    \dot{R}&=R\hat{\bm \Omega}.
\end{align}
The relation between the multirotor and the payload is given by 
\begin{align}\label{Dynamics:relation}
    \bm x_Q = \bm x_L - L\bm q.
\end{align}
According to  Newton's second law, the dynamics of the payload and the multirotor are
\begin{align}
    \label{Model:Qload4}
m_L\dot{\bm v}_L =& f_L \bm q - m_L g \bm e_3,\\
\label{Dynamics:multirotor}
m_Q\dot{\bm v}_Q =& f R \bm e_3 - f_L \bm q - m_Q g \bm e_3. 
\end{align}
Taking the inner product of \eqref{Dynamics:multirotor} with $\bm q$, and replacing $\dot{\bm v}_Q$ with the second time derivative of \eqref{Dynamics:relation} yields
\begin{align}\label{Dynamics:relation2}
m_Q \left( \bm q^\top \dot{\bm v}_L - \ddot{L} +\dot{\bm q}^\top \dot{\bm q} \right) = \bm q^\top (f R \bm e_3 - f_L \bm q - m_Q g \bm e_3). 
\end{align}
Then, substituting \eqref{Model:Qload4} into \eqref{Dynamics:relation2} gives the cable-length variation dynamics as
\begin{align}\label{Model:Qload5}
\frac{m_Q+m_L}{m_L} f_L - m_Q \ddot{L} = \bm q^\top f R \bm e_3 - m_Q L (\dot{\bm q}^\top \dot{\bm q}). 
\end{align}  
The multirotor is regarded as a point mass $m_Q$ located at the position vector $\bm r : = \bm x_Q - \bm x_L = - L \bm q$ with respect to the payload center $O$. Thus, the relative  angular momentum $\bm H$ about the payload center $O$ is expressed as:
\begin{align}\label{Dynamics:angular_momentum}
    \bm H  := \bm r \times m_Q (\bm v_Q - \bm v_L)  =  m_Q L^2\bm \omega.
\end{align}
By using the angular momentum theorem with respect to the accelerating reference point $O$ \cite{hinrichsen2005mathematical}, one has
\begin{align}\label{Dynamics:angular_momentum_theorem}
    \dot{\bm H} & =  \bm \tau_O -\bm r \times m_Q \bm a_O,
\end{align}
where $\bm \tau_O : = \bm r \times \left(fR\bm e_3 - f_L \bm q - m_Q g \bm e_3\right) $ is the total torque acting on the multirotor about $O$, and $\bm a_O:=\dot{\bm v}_L$ is the acceleration of the payload center.
Substituting \eqref{Dynamics:angular_momentum_theorem} into the time derivative of \eqref{Dynamics:angular_momentum} yields the cable direction dynamics
\begin{align}\label{Model:Qload6}
    m_Q L \dot{\bm \omega} = -\bm q \times f R \bm e_3 - 2m_Q \dot{L} \bm \omega.
\end{align}
The rotational dynamics of the multirotor \cite{lee2010geometric}, described by the Newton-Euler equation \cite{spong2020robot}, are
\begin{align}
    \label{Model:Qload7}
     J\dot{\bm \Omega}&=\bm \tau-\bm \Omega\times J\bm \Omega.
\end{align}
Collecting \eqref{Model:Qload1}--\eqref{Model:Qload3}, \eqref{Model:Qload4}, \eqref{Model:Qload5}, \eqref{Model:Qload6}, \eqref{Model:Qload7}, the complete dynamic model of the aerial transportation system with variable-length cable is established.} 
Since this paper focuses on payload trajectory tracking, the model is constructed based on the payload's position and velocity. 
By performing a variable substitution using \eqref{Dynamics:relation}, the system model \eqref{Model:Qload1}--\eqref{Model:Qload3}, \eqref{Model:Qload4}, \eqref{Model:Qload5}, \eqref{Model:Qload6}, \eqref{Model:Qload7} can be reformulated in terms of the multirotor's position and velocity.

\subsection{Control Objective and Error Dynamics}
For the convenience of subsequent control scheme design and analysis,  the state errors of the system are first defined as follows.
The payload position and velocity errors are  given by:
$$
    \bm e_{x}:= \bm x_L - \bm x_{Ld},
    \bm e_{v} :=\bm v_L - \dot{\bm x}_{Ld},
$$
where $\bm x_{Ld}\in\mathbb{R}^3$ represents the desired payload trajectory.  The cable direction  and  angular velocity errors are defined as:
$$
    \bm e_q := \bm q_d \times \bm  q,  \bm e_{\omega} := \bm \omega + \hat{\bm q}^2\bm \omega_d,
    $$
where $\bm q_d\in S^2$ and $\bm \omega_d := \bm q_d \times \dot{\bm q}_d\in\mathbb{R}^3$ denote the desired cable direction and angular velocity, respectively.
The cable length error is defined as:
$$
    e_L := L-L_d,
    $$
where $L_d\in\mathbb{R}$ is the desired cable length.
The multirotor attitude and angular velocity errors are defined as:
$$
    \bm e_R := \frac{1}{2}\left(R_d^\top R - R^\top R_d\right)^\vee, \bm e_{\Omega} := \bm \Omega - R^\top R_d \bm \Omega_d,
    $$
where $R_d\in SO(3)$  is the desired rotation matrix and $\bm \Omega_d\in\mathbb{R}^3$ is the desired multirotor angular velocity.

The control objective is to design a feedback control law that drives the payload to its desired trajectory,
while simultaneously adjusting the cable length and direction to achieve their respective desired states. This can be mathematically described as follows:
\begin{align}
     &\bm e_x \rightarrow \bm 0_{3\times 1}, \bm e_v \rightarrow \bm 0_{3\times 1},
     \bm e_q \rightarrow \bm 0_{3\times 1}, \bm e_{\omega} \rightarrow \bm 0_{3\times 1},\nonumber\\
      &e_L \rightarrow 0, \bm e_R \rightarrow \bm 0_{3\times 1}, \bm e_{\Omega} \rightarrow \bm 0_{3\times 1}.\nonumber
\end{align}
Subsequently, by substituting the system errors into the system dynamics model \eqref{Model:Qload1}--\eqref{Model:Qload3}, \eqref{Model:Qload4}, \eqref{Model:Qload5}, \eqref{Model:Qload6}, \eqref{Model:Qload7}, the open-loop error dynamics of the system can be rearranged as:
\begin{align}
    \label{Model:Oev}
    m_L\dot{\bm e}_v = &f_L \bm q- m_L\ddot{\bm x}_{Ld} - m_Lg\bm e_3,\\
    \label{Model:OeL}
    m_Q\ddot{e}_L = & - \bm q^\top f R\bm e_3 +\frac{m_Q+m_L}{ m_L} f_L  + m_QL\left(\dot{\bm q}^\top\dot{\bm q}\right)\nonumber\\
    &- m_Q\ddot{L}_d,
     \\
     \label{Model:Oeomega}
     m_QL \dot{\bm e}_\omega = &-\bm q \times fR \bm e_{3}-2m_Q\dot{L}\bm \omega + m_QL(\bm q^\top \bm \omega_d)\dot{\bm q} \nonumber\\
    & + m_QL (\bm \omega_d^\top \dot{\bm q}
    +\bm q^\top \dot{\bm \omega}_d)\bm q - m_QL \dot{\bm \omega}_d,
    \\
    \label{Model:OeOOmega}
     J{{\dot{\bm e}}_{\Omega }} = &\bm \tau  -\bm\Omega\times J \bm\Omega+J\left( \hat{\bm\Omega}{{R}^{\top }}{{R}_{d}}{{\bm \Omega }_{d}}-{{R}^{\top }}{{R}_{d}}{{{\dot{\bm \Omega }}}_{d}} \right).
\end{align}

\section{Controller Design}\label{sec:control}
In this section, the design process of the payload trajectory tracking controller is presented through the backstepping procedure. The control scheme is developed from four aspects: payload position control, cable length control, cable direction control, and multirotor attitude control. The block diagram of the proposed algorithm is depicted in Fig. \ref{fig:control_diagram}. The pink blocks represent the desired payload trajectory and cable length trajectory generator modules. The green blocks denote the control modules. The purple blocks illustrate the calculations conducted between these modules. Finally, the blue block represents the aerial transportation system.
\subsection{Payload Position Control}

\begin{figure*}[!thp]
    \begin{center}
    \includegraphics[width=6.6in]{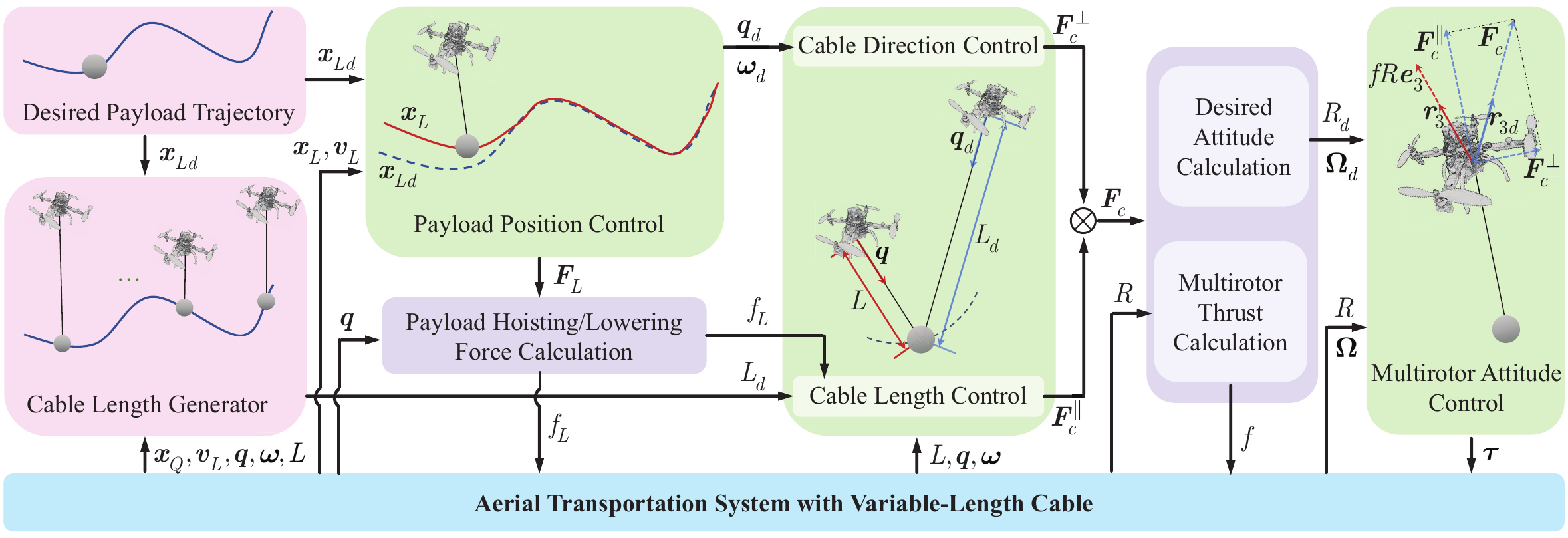}
    \caption{\small{Block diagram of the designed payload trajectory tracking algorithm.}}
    \label{fig:control_diagram}
\end{center} 
\end{figure*}
The payload position control scheme is designed based on the error dynamics \eqref{Model:Oev}.
According to the hierarchical control approach \cite{kendoul2009nonlinear,zhao2014nonlinear},
let $f_L \bm q: =\bm F_L + \bm \Delta_L $, where $\bm F_{L}\in \mathbb{R}^3$  is the virtual control input to be constructed,  and $\bm \Delta_L \in\mathbb{R}^3 $ is  the coupling term between the  virtual input $\bm F_{L}\in \mathbb{R}^3$ and
the cable direction error $\bm e_q$. The decomposition of $f_L \bm q$ is given by
\textcolor{blue}{\[
f_L\bm q
=\underbrace{\frac{f_L}{\bm q_d^{\!\top}\bm q}\,\bm q_d}_{\bm F_L}
\;+\;
\underbrace{\frac{f_L}{\bm q_d^{\!\top}\bm q}\Big[(\bm q_d^{\!\top}\bm q)\,\bm q-\bm q_d\Big]}_{\boldsymbol{\Delta L}}.
\]
As seen from the above decomposition, $ \bm F_L $  and $ \bm \Delta_L $ can be obtained by multiplying the same scalar $\tfrac{f_L}{\bm q_d^{\top}\bm q}$ with the vectors $\bm q_d$ and $(\bm q_d^{\top}\bm q)\bm q-\bm q_d$, respectively. Consequently, the geometric relationship between $\bm q_d$ and $(\bm q_d^{\top}\bm q)\bm q-\bm q_d$ in Fig. \ref{fig:DeltaL_FL} is directly analogous to that between $\bm F_L$ and $\bm \Delta_L$.
\begin{figure}[!thp]
    \begin{center}
    \includegraphics[width=2.4in]{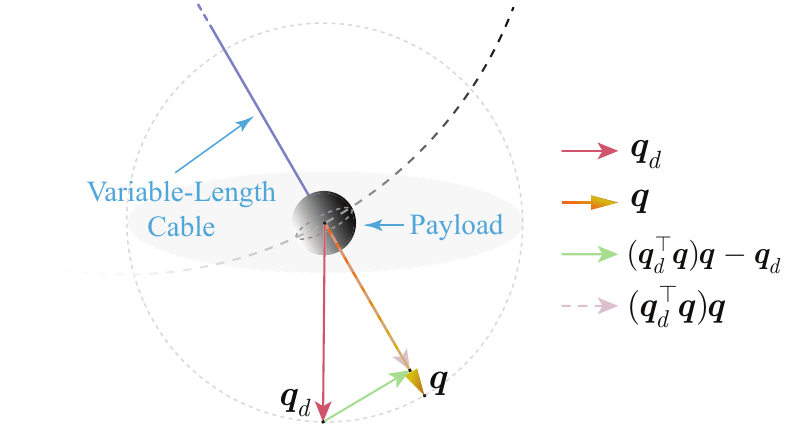}
    \caption{Geometric illustration of $\bm q$, $\bm q_d$, and  $(\bm q_d^{\top}\bm q) \bm q-\bm q_d$.}
    \label{fig:DeltaL_FL}
\end{center} 
\end{figure}
Using the vector triple-product identity, $\bm \Delta_L$ can be further expressed as: 
$
\bm \Delta_L 
= -\,\frac{f_L}{\bm q_d^\top\bm q}\,\bm q\times(\bm q_d\times\bm q)
= -\,\frac{f_L}{\bm q_d^\top\bm q}\,\bm q\times\bm e_q. 
$
Taking norms and recalling $\|\bm F_L\|=\tfrac{|f_L|}{|\bm q_d^\top\bm q|}\|\bm q_d\| =\tfrac{|f_L|}{|\bm q_d^\top\bm q|} $ yields
\begin{align}\label{Stability:NormdeltaL}
\|\bm \Delta_L\|
=\|\bm F_L\|\,\|\bm q\times\bm e_q\|\le \|\bm F_L\|\,\|\bm e_q\|.
\end{align}}
The payload position control scheme $\bm F_L$ is designed as:
\begin{align}\label{Control:Fq}
    \bm F_L =& - K_p \Tanh(\bm e_x)  - K_d \Tanh(\bm e_v)  \nonumber\\
    &+  m_L \ddot{\bm x}_{Ld} + m_L g\bm e_3,
\end{align}
where diagonal matrices $K_p = \mathrm{diag}\left(\left[k_{px},k_{py},k_{pz}\right]\right)$ and $K_d=\mathrm{diag}([k_{dx}, k_{dy}, k_{dz}]) \in \mathbb{R}_+^{3\times3}$
are positive-definite gain matrices.   Subsequently, the desired cable direction $\bm q_d $  can be obtained by
\begin{align}
    \bm q_d =& - \frac{\bm F_L}{\|\bm F_L\|} .
\end{align}
Thus, according to the decomposition of $f_Lq$, one has
\begin{align}
  \bm F_L = \frac{f_L}{\bm q_d^{\!\top}\bm q}\,\bm q_d = - \frac{f_L \|\bm F_L\| }{\bm F_L^{\!\top}\bm q}\, \frac{\bm F_L}{\|\bm F_L\|} =  \frac{f_L  }{\bm F_L^{\!\top}\bm q}\,\bm F_L,\nonumber
\end{align}
which indicates that the payload hoisting/lowering force $f_L$ can be derived by
\begin{align}
     f_L =& \bm F_L^\top \bm q .
\end{align}

\subsection{Cable Length Control}
From the error dynamic model \eqref{Model:OeL} and \eqref{Model:Oeomega},  it is evident that the term $fR \bm e_3$ impacts both the length and direction dynamics of the cable.
Specifically, the component of  $fR \bm e_3$ along $\bm q$ governs the cable length, whereas its component perpendicular to $\bm q$ dictates the direction of the cable.
Define the virtual control input of $fR \bm e_3$ as $\bm F_c$.
To decouple the cable length and direction control, $\bm F_c$ is divided into two parts that satisfy the following equation:
\begin{align}\label{Control:Fc}
    \bm F_c : = \bm  F_c^\parallel +  \bm F_c^\perp,
\end{align}
where $\bm  F_c^\parallel$ denotes the projection of $\bm F_c$ along $\bm q$, and
$\bm F_c^\perp$
is perpendicular to $\bm F_c^\parallel$.

Thereafter,  the control design for the cable length will be addressed. 
According to  the above analysis and the error dynamics \eqref{Model:OeL}, the cable length control law $f_c := -\bm q^\top \bm F_c^\parallel$ can be constructed as:
\begin{align}\label{Control:fc}
    f_c  = & -k_{pl} e_L - k_{dl} \dot{e}_L-\frac{m_Q+m_L}{m_L}   f_L \nonumber\\
    &- m_Q L (\dot{\bm q}^\top\dot{\bm q}) + m_Q  \ddot{L}_d - \frac{k_a e_L}{\iota^2 -e_L^2},
\end{align} 
where  $k_{pl}$, $k_{dl}$ and $k_a\in\mathbb{R}_+$ are  positive control gains, $\iota\in\mathbb{R}_+$ is a positive constant that specifies the admissible upper bound on the cable-length error.

\subsection{Cable Direction Control}
Based on the analysis of \eqref{Control:Fc} and the error dynamics \eqref{Model:Oeomega}, the cable direction control law $\bm F_c^\perp$ can be designed as:
\begin{align}\label{Control:Fcperp}
    \bm F_c^\perp   = & m_QL\cdot\bm q \times\left( - k_q \bm e_q -k_\omega \bm e_\omega - (\bm q^\top \bm \omega_d )\dot{\bm q}\right.\nonumber\\
    &\left.
     - \hat{\bm q}^2  \dot{\bm \omega}_d +2 \frac{\dot{L}}{L}\bm \omega
     -\frac{k_b \bm e_q}{\varrho^2 -\Psi_q }\right),
\end{align}
where $\Psi_q :=1-\bm q^\top \bm q_d$ is the configuration error for the $S^2$ manifold, $k_q,k_\omega $ and  $k_b\in\mathbb{R}_+$ are positive control gains, $ \varrho \in (0,1)$ is a positive constant that specifies the admissible upper bound on the cable-direction misalignment. Thus, according to \eqref{Control:Fc}, by collecting \eqref{Control:fc} and \eqref{Control:Fcperp}, the applied multirotor control  thrust input can be obtained as $f = \bm F_c^\top R\bm e_3$.

\subsection{Multirotor Attitude Control}
The desired multirotor attitude can be defined as:
\begin{align}
     R_d : = \left[\bm r_{1d}; \bm r_{2d}; \bm r_{3d}\right],
\end{align}
where $\bm r_{3d} \in S^2$ is calculated by $\bm r_{3d} = \frac{\bm F_c}{\|\bm F_c\|}$, $\bm r_{2d} = \frac{\bm r_{3d} \times \bm r_{1a}}{\|\bm r_{3d} \times \bm r_{1a}\|}$, and $\bm r_{1d} = \bm r_{2d} \times \bm r_{3d}$.
$\bm r_{1a}$ is an arbitrary vector that is not parallel to $\bm r_{3d}$. The desired angular velocity can be calculated as $\bm \Omega_d = ( R_d^\top \dot{R}_d)^\vee$.
Following the approach similar to that in \cite{sreenath2013geometric}, the multirotor attitude control is implemented as: 
\begin{align}\label{Control:tau}
    \bm \tau =&-\frac{1}{\epsilon^{2}}k_{R}\bm e_{R}-\frac{1}{\epsilon}k_{\Omega}\bm e_{\Omega}+ \bm \Omega\times J_{Q} \bm \Omega  \nonumber\\
&-J_{Q}(\hat{\bm \Omega}R^{T}R_{d}\bm \Omega_{d}-R^{T}R_{d}\dot{\bm \Omega}_{d}),
\end{align}
where $k_{R},k_{\Omega}\in\mathbb{R}_+$ are positive control gains, and $0<\epsilon<1$. Through the
singular perturbation theory \cite{sreenath2013geometric}, the  parameter $\epsilon$
is introduced to enable rapid exponential convergence of $\bm e_R$ and $\bm e_\Omega$.

\textcolor{blue}{From the above subsections, one can observe that the desired cable direction $\bm q_d$, cable angular velocity $\bm \omega_d$, rotation matrix $R_d$, and multirotor angular velocity $\bm \Omega_d$ depend on the virtual control inputs $\bm F_L$ and $\bm F_c$ and their time derivatives. Fortunately, these quantities can be computed from the closed-loop system dynamics via systematic derivations, without requiring acceleration or higher-order derivatives of the states. Consequently, the proposed method is readily deployable on real systems, with no need for numerical differentiation of the states.}

\section{Cable Length Generator}\label{sec:generator}
With the introduction of the variable-length cable, the relative distance between the payload and the multirotor can be freely adjusted. 
Once the payload trajectory has been established, the motion of the multirotor and changes in cable length can be tailored according to the specific characteristics of the task.
For example, during the  payload pickup and release  process, the movement of the multirotor should be minimized to ensure precise handling, with the payload managed by extending and retracting the cable. In contrast, during long-distance transportation, the emphasis shifts to the multirotor's movement, while minimizing variations in the cable length. Building upon this concept, this section designs the cable length generator.
According to the aforementioned backstepping control schemes, the cable length trajectory is set to be fourth-order continuous to ensure the smoothness of the control input. A generalized state vector $\bm \xi$ is introduced to describe the  nonlinear optimization problem, i.e., $\bm \xi : = (\bm x_Q, \bm q, L, \bm v_Q, \bm \omega, \dot{L}, \bm L_t) \in \mathbb{R}^3\times S^2 \times \mathbb{R}\times \mathbb{R}^3 \times  \mathbb{R}^3 \times  \mathbb{R} \times  \mathbb{R}^5$, where $\bm L_t :=\left[L_d,\dot{L}_d,\ddot{L}_d,L_d^{(3)},L_d^{(4)}\right]^\top\in\mathbb{R}^5$ is the generalized desired cable length trajectory  including $L_d$ as well as its first to fourth time derivatives. $\bm X_d : = ( \bm x_{Ld}, \dot{\bm x}_{Ld}, \ddot{\bm x}_{Ld},  {\bm x}_{Ld}^{(3)}, {\bm x}_{Ld}^{(4)}) \in \mathbb{R}^{15}$ is the generalized desired trajectory vector of the payload. To obtain a reasonable cable length trajectory, the general form of the cable length generator is designed as follows:
\begin{align}
    \label{gen:costfunction}
    \min_{L_d^{(5)}}\,\,&\int_{t}^{t + T}  \ell (\bm \xi(\delta), {L}_d^{(5)}(\delta)) \,\mathrm{d}\delta  \\ 
    \mathrm{s.t.}\,\,
    \label{gen:constraint1}
    &\dot{\bm \xi} = f_\xi(\bm \xi,  \bm X_d, L_d^{(5)} ),  \\
    \label{gen:constraint2}
    & \underline{\bm L}_t \leq \bm L_t\leq \overline{\bm L}_t,  \\
    \label{gen:constraint3}
    & \underline{ L}_d^{(5)} \leq  L_d^{(5)}\leq \overline{ L}_d^{(5)},
\end{align}
where \eqref{gen:costfunction} is the cost function, \eqref{gen:constraint1} is the generalized dynamic constraint, \eqref{gen:constraint2} is the constraint of the generalized desired cable length, and \eqref{gen:constraint3} is the constraint of the output. 
\textcolor{blue}{The cable length generator is model-based and controller-aware. Under the proposed backstepping control schemes, the cable-length trajectory must be $C^4$ continuous to ensure smooth control inputs. Accordingly,  the fifth derivative of the desired cable length $L_d^{(5)}$, is chosen as the optimization output. 
Since the cable-length generator is devoted to coordinating the multirotor motion with cable-length variations, a generalized dynamical system \eqref{gen:constraint1} is adopted as the dynamic constraint of the generator. Its state vector comprises the multirotor motion; the cable length and its direction; and the desired cable length along with its first-fourth derivatives.
According to the relationship between the multirotor and the payload \eqref{Model:Qload4}, the dynamic constraint \eqref{gen:constraint1} can be obtained by substituting the payload position control scheme \eqref{Control:Fq}, along with the cable length and direction control laws \eqref{Control:fc}, \eqref{Control:Fcperp} into the dynamic model \eqref{Model:Qload4}, \eqref{Model:Qload5} and \eqref{Model:Qload6}.} 
\eqref{gen:constraint2} is the constraint of the generalized desired cable length, where $\underline{\bm L}_t$, $\overline{\bm L}_t \in \mathbb{R}^5$ are the lower and upper bounds, respectively.
Constraint \eqref{gen:constraint3} governs the output, specifying the lower bound $\underline{L}_d^{(5)}$ and the upper bound $\overline{ L}_d^{(5)}$. 
Tailored to different task requirements, the cost function $\ell$ is designed to enhance coordination between the multirotor's motion and cable length variations, integrating this with the system's state.

\section{Stability Analysis}\label{sec:stability}
The stability of the closed-loop system is analyzed in this section.  
By neglecting the coupling term $\bm \Delta_L$,  the  stability of the  payload position, cable length and direction
are first proven individually. Subsequently, based on the theory of cascade systems, the stability  of the overall system 
is ensured  by demonstrating that the coupling term satisfies the growth restriction condition 
\cite{kendoul2009nonlinear,zhao2014nonlinear}.

\begin{theorem}\label{Theorem:position}
    The designed payload position control law \eqref{Control:Fq} guarantees the convergence of the payload position and velocity errors to zero
    asymptotically, i.e.,
    \begin{align}
        \lim_{t \rightarrow \infty}\bm e_x =  \bm 0_{3\times1},   \lim_{t \rightarrow \infty}\bm e_v =  \bm 0_{3\times1}.\nonumber
    \end{align}
\end{theorem}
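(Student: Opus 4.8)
The plan is to read Theorem~\ref{Theorem:position} as a statement about the reduced payload error subsystem, i.e.\ with the coupling term $\bm\Delta_L$ neglected (equivalently $f_L\bm q=\bm F_L$), consistent with the cascade strategy announced at the start of this section. Substituting the control law \eqref{Control:Fq} into the error dynamics \eqref{Model:Oev}, the feedforward terms $m_L\ddot{\bm x}_{Ld}$ and $m_Lg\bm e_3$ cancel, so that together with $\dot{\bm e}_x=\bm e_v$ one obtains the \emph{autonomous} closed loop
\begin{align}
\dot{\bm e}_x &= \bm e_v, \nonumber\\
m_L\dot{\bm e}_v &= -K_p\Tanh(\bm e_x)-K_d\Tanh(\bm e_v). \nonumber
\end{align}
Since the control is of saturated-PD type, the natural Lyapunov candidate pairs the kinetic energy with a $\ln\cosh$ potential that integrates the $\Tanh$ nonlinearity:
\[
V \;=\; \tfrac12\,m_L\,\|\bm e_v\|^2 \;+\; \sum_{i=1}^{3} k_{pi}\,\ln\cosh(e_{x,i}),
\]
where $k_{pi}$ is the $i$-th diagonal entry of $K_p$ and $e_{x,i}$ the $i$-th component of $\bm e_x$.

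First I would check that $V$ is positive definite and radially unbounded: the quadratic term controls $\bm e_v$, while each $\ln\cosh(e_{x,i})$ is nonnegative, vanishes only at $e_{x,i}=0$, and grows like $|e_{x,i}|$, so every sublevel set $\{V\le c\}$ is compact. Next I would differentiate along trajectories: using $\tfrac{d}{dt}\ln\cosh(e_{x,i})=\tanh(e_{x,i})\,e_{v,i}$ and the diagonality of $K_p$, one gets $\dot V = \Tanh(\bm e_x)^\top K_p\bm e_v + \bm e_v^\top\bigl(-K_p\Tanh(\bm e_x)-K_d\Tanh(\bm e_v)\bigr)$; the two cross terms are scalar transposes of one another and cancel, leaving
\[
\dot V \;=\; -\,\bm e_v^\top K_d\,\Tanh(\bm e_v) \;=\; -\sum_{i=1}^{3} k_{di}\,e_{v,i}\tanh(e_{v,i}) \;\le\; 0,
\]
with equality iff $\bm e_v=\bm 0$, because $k_{di}>0$ and $s\tanh s\ge 0$.

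Finally I would invoke LaSalle's invariance principle on a compact sublevel set $\{V\le c\}$, which is positively invariant since $\dot V\le 0$. On the set $\dot V=0$ one has $\bm e_v=\bm 0$; restricting to the largest invariant subset forces $\dot{\bm e}_v=\bm 0$ as well, whence $K_p\Tanh(\bm e_x)=\bm 0$, and therefore $\bm e_x=\bm 0$ since $K_p$ is nonsingular and $\Tanh$ is injective with $\Tanh(\bm 0)=\bm 0$. Hence the only invariant set contained in $\{\dot V=0\}$ is the origin, and radial unboundedness of $V$ promotes this to convergence from every initial condition, giving $\bm e_x\to\bm 0_{3\times1}$ and $\bm e_v\to\bm 0_{3\times1}$.

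I do not anticipate a substantive obstacle: this is the standard saturated-PD Lyapunov argument, and the points requiring care are modest, namely noticing that the reference acceleration cancels (rendering the error loop autonomous, so that plain LaSalle applies) and that the cross-term cancellation in $\dot V$ hinges on $K_p,K_d$ being diagonal so they commute past the componentwise $\Tanh$ maps. If one prefers to avoid LaSalle, e.g.\ to parallel the time-varying cascade reasoning used later, an alternative is to observe that $\dot V\le 0$ keeps $\bm e_x,\bm e_v$ bounded and $\bm e_v^\top K_d\Tanh(\bm e_v)$ integrable and uniformly continuous, apply Barbalat's lemma to get $\bm e_v\to\bm 0$, and then a second Barbalat-type step on the $\bm e_v$-dynamics to conclude $\Tanh(\bm e_x)\to\bm 0$, hence $\bm e_x\to\bm 0$.
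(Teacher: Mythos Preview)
Your proposal is correct and uses essentially the same Lyapunov function as the paper (yours is $m_L$ times the paper's $V_1$), with the identical computation $\dot V=-\bm e_v^\top K_d\Tanh(\bm e_v)\le 0$. The only difference is in how convergence is extracted: you argue primarily via LaSalle's invariance principle, exploiting that the reduced error loop is autonomous, whereas the paper proceeds exactly along your alternative route, first applying Barbalat's lemma to obtain $\bm e_v\to\bm 0$ and then an extended Barbalat argument on the $\bm e_v$-dynamics to force $\Tanh(\bm e_x)\to\bm 0$. Your LaSalle route is slightly cleaner here since autonomy is available; the paper's Barbalat route has the advantage of being uniform with the time-varying cascade analysis used in the subsequent theorems.
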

\begin{proof}
\textcolor{blue}{To prove \emph{Theorem \ref{Theorem:position}}, the following positive-definite function is chosen as the Lyapunov function candidate:
\begin{align}
    \label{Control:V_1}
    V_1=  \frac{1}{m_L} \bm k_p^\top \Ln \left[\Cosh(\bm e_x)\right] + \frac{1}{2}\bm e_v^\top\bm e_v  ,
\end{align}
where $\bm k_p=\left[k_{px},k_{py},k_{pz}\right]^\top\in \mathbb{R}_+^{3}$ is the position gain vector satisfying $K_p = \diag(\bm k_p)$.
Taking the time derivative of \eqref{Control:V_1}, neglecting the coupling term $\bm \Delta_L$, and substituting \eqref{Model:Oev} and \eqref{Control:Fq} yields:
\begin{align}
    \label{Stability:dotV_1}
   \dot{V}_1 
    =  &\frac{1}{m_L} \bm e_v^\top\left(K_p  \Tanh(\bm e_x) + \bm F_L - m_L\ddot{\bm x}_{Ld} -m_L g\bm e_3  \right) \nonumber\\
   = & -\bm e_v^\top K_d \Tanh(\bm e_v) \nonumber\\
   \leq & 0 .
\end{align}}
\textcolor{blue}{Therefore, from \eqref{Control:Fq}, \eqref{Control:V_1} and \eqref{Stability:dotV_1}, one knows $V_1(t) $ is nonincreasing and bounded, which implies that $V_1(t) \leq V_1(0) < \infty$ for all $t \geq 0$. It follows  that $\bm e_x $ and $\bm e_v$ are both bounded, i.e.,
$
    \bm e_x, \bm e_v \in \mathcal{L}_\infty.
$ }

\textcolor{blue}{To further analysis the convergence of the payload position and velocity errors, define $ \varepsilon_0:=\bm e_v^\top K_d\Tanh(\bm e_v)$, $\bm \varepsilon_1 : = - \frac{1}{m_L}K_p \Tanh(\bm e_x)$,  $\bm \varepsilon_2 : = -\frac{1}{m_L} K_d \Tanh(\bm e_v)$, which are bounded, i.e., $ \varepsilon_0, \bm \varepsilon_1, \bm \varepsilon_2 \in \mathcal{L}_\infty$. 
Substituting \eqref{Control:Fq} into the error dynamics \eqref{Model:Oev}, the closed-loop dynamics can be expressed as:
$
    \dot{\bm e}_v =\bm \varepsilon_1 + \bm \varepsilon_2,
$
which implies $\dot{\bm e}_v \in \mathcal{L}_\infty$. 
Integrating \eqref{Stability:dotV_1} with respective to time yields:
$
\int_{0}^{\infty}\varepsilon_0(t)\,dt
=V_1(0)-\lim_{t\to\infty}V_1(t)<\infty,
$
so $\varepsilon_0 \in \mathcal{L}_1$. Since  $\bm e_v, \dot{\bm e}_v \in\mathcal L_\infty$, 
it follows that $\dot \varepsilon_0=\dot{\bm e}_v^\top K_d\Tanh(\bm e_v)
+\bm e_v^\top K_d \diag\!\left(\Sech^2(\bm e_v)\right)\dot{\bm e}_v\in\mathcal L_\infty$, hence $\varepsilon_0$ is uniformly continuous. According to Barbalat's lemma \cite{khalil2002nonlinear},  $\varepsilon_0$ is uniformly continuous and $\varepsilon_0 \in \mathcal{L}_1$, one can conclude that $\varepsilon_0\to 0$ as $t\to\infty$.
Thus, it follows that
\begin{align}\label{Stability:ev0}
  \lim_{t \to \infty} \bm e_v =\bm 0.
\end{align}
Since
$
\dot{\bm \varepsilon}_1 = - \frac{1}{m_L}K_p \diag\big(\Sech^2(\bm e_x)\big)\bm e_v\in\mathcal{L}_\infty,
$
one knows $\bm \varepsilon_1$ is uniformly continuous.
Based on \eqref{Stability:ev0}, one can achieve $\lim_{t \to \infty}\bm \varepsilon_2  = \bm 0$.
In summary, one has $\lim_{t \to \infty} \bm e_v = \bm 0$, $\bm \varepsilon_1$ is uniformly continuous, and $\lim_{t \to \infty}\bm \varepsilon_2 = \bm 0$. According to the extended Barbalat's Lemma \cite{Behal2010}, one can conclude that $\lim_{t \to \infty}\bm \varepsilon_1 = \bm 0 $. So we can derive that
\begin{align}\label{Stability:ex0}
\lim_{t \to \infty} \bm e_x = \bm 0.
\end{align}
Thus, by collecting the result in \eqref{Stability:ev0} and \eqref{Stability:ex0}, the proof is completed.}
\end{proof}

\begin{theorem}\label{Theorem:cable}
    Consider the closed-loop system under the control laws \eqref{Control:fc} and \eqref{Control:Fcperp}. 
    \textcolor{blue}{Let the barrier Lyapunov functions be defined on the safe sets $\mathcal S_L:=\{|e_L|<\iota\}$ and $\mathcal S_q:=\{\Psi_q<\varrho^2\}$, where $\iota\in \mathbb R_{+}$ and $\varrho\in(0,1)$ are positive constants. 
    If the \emph{initial errors} satisfy $|e_L(0)|<\iota$ and $\Psi_q(0)<\varrho^2$,
then there exist positive control parameters
$k_{pl},k_{dl},k_a,k_q,k_\omega,k_b\in\mathbb R_{+}$ such that: }

    \textcolor{blue}{1) $|e_L(t)|<\iota$ and $\Psi_q(t)<\varrho^2$ for all $t\ge0$;}

    2) the cable length and direction tracking errors converge exponentially to zero, i.e.,
    \begin{align}
        \lim_{t \rightarrow \infty}e_L\! = \!  0,  \lim_{t \rightarrow \infty}\dot{e}_L\!  =  \! 0, \lim_{t \rightarrow \infty}\bm e_q\!  =\! \bm 0_{3\times1},
        \lim_{t \rightarrow \infty}\bm e_\omega \! =\! \bm 0_{3\times1}.\nonumber
    \end{align}
\end{theorem}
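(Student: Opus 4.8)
\emph{Strategy.} The plan is to build a single barrier Lyapunov function on the product safe set $\mathcal S_L\times\mathcal S_q$ that combines a quadratic part in $(e_L,\dot e_L,\bm e_\omega)$, a logarithmic barrier term $\tfrac12 k_a\ln\!\big(\iota^2/(\iota^2-e_L^2)\big)$ penalizing the cable-length error, and a barrier term $k_b\ln\!\big(\varrho^2/(\varrho^2-\Psi_q)\big)$ penalizing the cable-direction misalignment, together with the standard $S^2$ cross term that couples $\bm e_q$ and $\bm e_\omega$ (as in \cite{sreenath2013geometric}). Because the theorem treats the cable subsystem in isolation (the multirotor attitude loop and the coupling term $\bm\Delta_L$ are handled later via the cascade argument), one may substitute $fR\bm e_3=\bm F_c=\bm F_c^\parallel+\bm F_c^\perp$ directly, so that the designed laws \eqref{Control:fc} and \eqref{Control:Fcperp} cancel the indicated drift terms in \eqref{Model:OeL} and \eqref{Model:Oeomega} exactly.

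\emph{Key steps, in order.} First I would write the closed-loop error dynamics: substituting $f_c=-\bm q^\top\bm F_c^\parallel$ from \eqref{Control:fc} into \eqref{Model:OeL} gives $m_Q\ddot e_L=-k_{pl}e_L-k_{dl}\dot e_L-k_ae_L/(\iota^2-e_L^2)$, a scalar second-order system with a barrier restoring term; substituting $\bm F_c^\perp$ from \eqref{Control:Fcperp} into \eqref{Model:Oeomega} and projecting with the identities $-\hat{\bm q}^2\bm a=\bm a$ for $\bm a\perp\bm q$ gives $\dot{\bm e}_\omega=-k_q\bm e_q-k_\omega\bm e_\omega-k_b\bm e_q/(\varrho^2-\Psi_q)$ modulo terms that vanish on the constraint surface. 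Second, I would verify the kinematic identities $\dot\Psi_q=\bm e_q^\top\bm e_\omega$ (or the analogous relation from the definition $\Psi_q=1-\bm q^\top\bm q_d$) and $\|\bm e_q\|^2=\Psi_q(2-\Psi_q)$, which are what make the $S^2$ barrier term differentiate cleanly. Third, I would assemble the Lyapunov function $V_2$ with a small cross-coupling gain, show it is positive-definite and radially unbounded \emph{on} the open safe set (it blows up as $|e_L|\to\iota$ or $\Psi_q\to\varrho^2$), and compute $\dot V_2$; choosing $k_q,k_\omega,k_{pl},k_{dl}$ large relative to the cross term and the residual bounded terms yields $\dot V_2\le -c\,V_2$ for some $c>0$ — this is the exponential-decay estimate. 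Fourth, from $V_2(t)\le V_2(0)e^{-ct}<\infty$ and the barrier structure, boundedness of $V_2$ forces $|e_L(t)|<\iota$ and $\Psi_q(t)<\varrho^2$ for all $t$ (claim 1, forward invariance), and then the quadratic lower bound on the non-barrier part of $V_2$ gives exponential convergence of $e_L,\dot e_L,\bm e_q,\bm e_\omega$ to zero (claim 2).

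\emph{Main obstacle.} The hard part will be the $\dot V_2$ computation for the cable-direction loop: the barrier term $k_b\bm e_q/(\varrho^2-\Psi_q)$ in \eqref{Control:Fcperp} must be shown to produce, after projection onto the tangent space at $\bm q$ and pairing against $\bm e_\omega$, exactly the derivative of the barrier potential $k_b\ln(\varrho^2/(\varrho^2-\Psi_q))$ along trajectories, and simultaneously the cross term $\bm e_q^\top\dot{\bm e}_\omega$ needs a sign-definite bound that does not degrade near the boundary $\Psi_q\to\varrho^2$ — so the cross-coupling gain and the barrier gain $k_b$ must be tuned jointly, with the bound $\|\bm e_q\|^2\le 2\Psi_q<2\varrho^2<2$ used to keep the indefinite terms dominated. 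Establishing the required inequalities among $k_q,k_\omega,k_b$ and the cross gain uniformly over the safe set, while also absorbing the length-loop cross term $k_a e_L/(\iota^2-e_L^2)$ against $\dot e_L$, is where the bulk of the technical care goes; everything else is the standard backstepping-on-$SO(3)/S^2$ bookkeeping.
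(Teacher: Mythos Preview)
Your proposal is correct and follows essentially the same route as the paper: substitute $fR\bm e_3=\bm F_c$ on the slow manifold, obtain the decoupled closed-loop error dynamics you wrote down, and use barrier Lyapunov functions with a small cross term (the paper calls these $\beta_L e_L\dot e_L$ and $\beta_q\bm e_q^\top\bm e_\omega$) to get $\dot V\le -\alpha V$ on the safe set. The only cosmetic difference is that the paper exploits the complete decoupling of the length and direction subsystems by treating them with two \emph{separate} Lyapunov functions $V_2$ and $V_3$ rather than your single combined one; since the closed-loop dynamics for $(e_L,\dot e_L)$ and $(\bm e_q,\bm e_\omega)$ share no states, this simplifies the gain conditions (each cross gain $\beta_L,\beta_q$ is tuned independently against its own subsystem) and makes the ``main obstacle'' you flagged---the bound on $\dot{\bm e}_q^\top\bm e_\omega$, handled in the paper's Appendix via the constant $C_\omega=\sup\|(2I-\bm q\bm q^\top)\bm\omega_d\|$---entirely local to the direction loop.
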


\begin{proof}
Consider the singularly perturbed model with time-scale parameter $\varepsilon$.
The multirotor attitude is exponentially stable and time-scale separated.
In the reduced slow model obtained by letting $\varepsilon\to 0$ \cite{sreenath2013geometric}, the fast states satisfy $R=R_d$, so the thrust vector equals the commanded virtual force:
$
fRe_3 = fR_de_3 = \bm F_c= \bm  F_c^\parallel +  \bm F_c^\perp.
$
Hence, on the slow manifold, the cable length and direction dynamics evolve with $R\equiv R_d$ and the realized thrust matches the commanded virtual force.

\textcolor{blue}{By selecting
$0<\beta_L < \min \left\{ \sqrt{\frac{k_{pl}}{m_Q}}, \frac{4k_{pl}k_{dl}}{4m_Qk_{pl} + k_{dl}^2} \right\}$, the barrier Lyapunov function candidate is chosen as
\begin{align}\label{Control:V_2}
    V_2 \!= \! \frac{k_{pl}}{2m_Q}  e_L^2 \!+\!\beta_L e_L \dot{e}_L\! +\! \frac{1}{2}  \dot{e}_L^2 \!+\!
 \frac{k_a}{2m_Q} \ln \frac{\iota^2}{\iota^2\! -\! e_L^2},
\end{align}
which is well-defined on $\mathcal S_L:=\{ |e_L|<\iota\}$.
If the initial condition satisfies $|e_L(0)|<\iota$, it indicates that $V_2(0)<\mathcal{L}_\infty$.
Thus, for all $t>0$, one has $|e_L(t)|<\iota$,
and the subsequent derivations are carried out on the interior of $\mathcal S_L$.
Let $\bm z_L:= \left[e_L,\dot{e}_L\right]^\top$,  it is straightforward to show that
\begin{align}
    \label{Stability:V_2}
    V_2
    = &\frac{1}{2} \bm z_L ^\top N_L \bm z_L +
    \frac{k_a}{2m_Q} \ln \frac{\iota^2}{\iota^2 - e_L^2}\nonumber\\
\leq & \frac{1}{2} \bm z_L ^\top N_L \bm z_L + \frac{k_a}{2m_Q} \frac{e_L^2}{\iota^2 - e_L^2}.
\end{align}
where $N_L = \begin{bmatrix}\frac{k_{pl}}{m_Q} & \beta_L\\ \beta_L & 1\end{bmatrix}$.
Taking the time derivative of $V_2$ on the interior of $\mathcal S_L$ and substituting dynamic \eqref{Model:OeL}  and controller \eqref{Control:fc} into the result yields
\begin{align}  \label{Stability:dotV_2}
    \dot{V}_2
        = &  \frac{k_{pl}}{m_Q}  e_L\dot{e}_L + \beta_L  \dot{e}_L^2 + (\beta_L e_L + \dot{e}_L)\cdot
     \left(\frac{m_Q+m_L}{m_Q m_L } f_L\right. \nonumber\\
     & \left. -   \ddot{L}_d - \frac{1}{m_Q}\bm q^\top \bm F_c^\parallel + L\left(\dot{\bm q}^\top\dot{\bm q}\right)\right)
     + \frac{k_a}{m_Q}\frac{e_L\dot{e}_L}{\iota^2 -e_L^2}\nonumber\\
     =& -\frac{\beta_L k_{pl}}{m_Q} e_L^2 - \frac{\beta_L k_{dl}}{m_Q} e_L \dot{e}_L -(\frac{k_{dl}}{m_Q}-\beta_L)\dot{e}_L^2\nonumber\\
     &- \frac{k_a \beta_L e_L^2}{m_Q(\iota^2 -e_L^2)}\nonumber\\
     =&- \bm z_L ^\top W_L \bm z_L - \frac{k_a \beta_L e_L^2}{m_Q(\iota^2 -e_L^2)},
\end{align}
where $W_L = \begin{bmatrix}\frac{\beta_L k_{pl}}{m_Q} & \frac{\beta_L k_{dl}}{2m_Q}\\ 
    \frac{\beta_L k_{dl}}{2m_Q} & \frac{k_{dl}}{m_Q}-\beta_L\end{bmatrix}$.} 
\textcolor{blue}{The term $- \frac{k_a \beta_L e_L^2}{m_Q(\iota^2 -e_L^2)}$  blows up as $|e_L|\to\iota$, thereby enforcing forward invariance of $\mathcal{S}_L$. Thus, by choosing 
$L_d$ and $\iota$ appropriately, the cable length can be guaranteed to remain strictly positive for all time, without imposing the assumption that it is always positive \cite{yu2023adaptive}.}
Selecting $\alpha_L := \min\left\{2\frac{\lambda_{\min}(W_L)}{\lambda_{\max}(N_L)} , 2 \beta_L\right\}$,
one can obtain that
\begin{align}\label{Stability:V_2_0}
    \dot{V}_2 \leq -\alpha_L V_2.
\end{align}
Therefore, the cable length control law \eqref{Control:fc} can guarantee the exponential convergence of the cable length error to zero.

\textcolor{blue}{Furthermore,  selecting the following scalar function related to the cable direction and angular velocity as the barrier Lyapunov function candidate:
\begin{align}\label{Control:V_3}
    V_3 = &k_q\Psi_q  + \frac{1}{2} \bm e_{\omega}^\top \bm e_{\omega}
    + \beta_q \bm e_q^\top\bm e_\omega +k_b\ln\frac{\varrho^2}{\varrho^2- \Psi_q},
\end{align}
where $0<\beta_q < \min \left\{ \sqrt{k_q}, \frac{4k_qk_\omega}{4k_q + (k_\omega+C_\omega)^2} \right\}$, and $C_\omega = \sup\left(\left\|(2I_{3\times3}-\bm q \bm q^\top)\bm \omega_d\right\|\right)$ is a positive constant. 
$V_3$ is well-defined on the safe set $\mathcal S_q:=\{\Psi_q<\varrho^2\}$. The initial condition $\Psi_q(0)<\varrho^2$ indicates that $V_3(0)<\mathcal{L}_\infty$. Thus, for all $t>0$, one has $\Psi_q(t)<\varrho^2$, and the subsequent derivations are carried out on the interior of $\mathcal S_q$.
By the definitions of $\bm e_q$ and $\Psi_q$, they satisfy the relationship $\bm e_q^\top \bm e_q = \Psi_q(2-\Psi_q)$, i.e.,
$
\Psi_q=\frac{\|\bm e_q\|^2}{\,2-\Psi_q\,}.
$
Thus, under the condition $ \Psi_q < \varrho^2$ with $\varrho\in(0,1)$, it follows that $2-\Psi_q \in (2-\varrho^2,2)$. 
Then, the following inequality can be obtained:
\begin{align}
    \frac{1}{2} \|\bm e_q\|^2 \leq \Psi_q < \frac{1}{2-\varrho^2} \|\bm e_q\|^2.
\end{align}
Let $\bm z_q : = \left[\|\bm e_q\|,\|\bm e_\omega\|\right]^\top$,  it is straightforward to show that
\begin{align}
    \label{Stability:V_3_0}
    V_3\geq &
    \frac{1}{2} \bm z_q ^\top N_{q1} \bm z_q +k_b\ln\frac{\varrho^2}{\varrho^2- \Psi_q}, \\
    \label{Stability:V_3}
     V_3 
     \leq & \frac{1}{2} \bm z_q ^\top N_{q2} \bm z_q +k_b \frac{\Psi_q}{\varrho^2- \Psi_q}\nonumber\\
        < & \frac{1}{2} \bm z_q ^\top N_{q2} \bm z_q + \frac{k_b}{2-\varrho^2} \frac{\|\bm e_q\|^2}{\varrho^2- \Psi_q},
\end{align}
where $N_{q1} = \begin{bmatrix}k_q & -\beta_q\\ -\beta_q & 1\end{bmatrix}$, $N_{q2} = \begin{bmatrix}
    \frac{2k_q}{2-\varrho^2} & \beta_q\\ \beta_q & 1
\end{bmatrix}$.
Taking the time derivative of  $V_3$ on the interior of $\mathcal S_q$ and inserting  dynamic \eqref{Model:Oeomega}  and controller \eqref{Control:Fcperp} into the result, one achieves
\begin{align}  \label{Stability:dotV_3}
    \dot{V}_3
        =& \left(\bm e_{\omega}\! + \!\beta_q \bm e_q\right)\left(\!-\frac{1}{m_QL}\bm q \times \bm F_c^\perp \!-\!2\frac{\dot{L}}{L}\bm \omega \! +\!
    (\bm q^\top \bm \omega_d)\dot{\bm q}\right. \nonumber\\
    &+  \left(\bm \omega_d^\top \dot{\bm q}
    +\bm q^\top \dot{\bm \omega}_d\right)\bm q -  \dot{\bm \omega}_d \Bigg)\!\!+ \beta_q \dot{\bm e}_q^\top \bm e_\omega
     + k_q \bm e_q^\top \bm e_\omega \nonumber\\
     &+ \frac{k_b\bm e_q^\top \bm e_\omega}{\varrho^2 -\Psi_q}\nonumber\\
    =& -k_q\beta_q\bm e_q^\top\bm e_q - k_\omega \beta_q\bm e_q^\top \bm e_\omega -  k_\omega\bm e_\omega^\top\bm e_\omega +
    \beta_q \dot{\bm e}_q^\top \bm e_\omega\nonumber\\
    &- \frac{k_b \beta_q \bm e_q^\top \bm e_q }{ \varrho^2 -\Psi_q}\nonumber\\
    \leq& -k_q\beta_q \|\bm e_q\|^2 + k_\omega \beta_q\|\bm e_q\| \|\bm e_\omega\| -  k_\omega \|\bm e_\omega\|^2 \nonumber\\
    &+\beta_q \|\bm e_\omega\|^2 + \beta_q C_\omega \|\bm e_q\| \|\bm e_\omega\| - \frac{k_b \beta_q \|\bm e_q\|^2}{ \varrho^2 -\Psi_q}\nonumber \\
    = & -\bm z_q ^\top W_q \bm z_q - \frac{k_b \beta_q \|\bm e_q\|^2}{ \varrho^2 -\Psi_q},
\end{align}
where $W_q = \begin{bmatrix} \beta_q k_q&  -\frac{1}{2}\beta_q( k_\omega + C_\omega)\\
    - \frac{1}{2}\beta_q( k_\omega +  C_\omega) & k_\omega -\beta_q\end{bmatrix}$.
Please refer to Appendix \ref{App:A} for the calculations related to $ \dot{\bm e}_q^\top \bm e_\omega $.} 
\textcolor{blue}{The term $-\dfrac{k_b \beta_q \|\bm e_q\|^2}{\varrho^2-\Psi_q}$ blows up as $\Psi_q \to \varrho^2$, thereby rendering the set $\mathcal S_q := \{\Psi_q<\varrho^2\}$ forward invariant. Consequently, unlike prior works that bound $\Psi_q$ to exclude the antipodal configuration \cite{sreenath2013geometric,lee2017geometric,goodman2022geometric}, the stability analysis requires no such assumption in this paper.
Moreover, when $\bm q_d$ is in the lower hemisphere and $\varrho$ is chosen appropriately, the cable direction is guaranteed to remain within the lower hemisphere, thereby obviating the assumption that the payload must always lie beneath the multirotor \cite{liang2022unmanned,yu2023adaptive,yu2024visual}.}
Let $\alpha_q : = \min\left\{2\frac{\lambda_{\min}(W_q)}{\lambda_{\max}(N_{q2})} ,  \beta_q(2-\varrho^2)\right\}$,
one can achieve that
\begin{align}\label{Stability:dotV_3_0}
    \dot{V}_3 \leq -\alpha_q V_3.
\end{align}
Therefore, 
the zero equilibrium of the
the cable direction error $\bm e_q, \bm e_\omega$ is exponentially stable. 

By collecting the results in \eqref{Stability:V_2_0} and \eqref{Stability:dotV_3_0}, \emph{Theorem \ref{Theorem:cable}} is proved.
\end{proof}
\textcolor{blue}{
\begin{remark}
    Note that $\beta_L$ and $\beta_q$ are chosen so that $V_2$ and $V_3$ are positive definite with respect to the error states $\bm z_L$ and $\bm z_q$, respectively, and that their time derivatives $\dot V_2$ and $\dot V_3$ are negative definite. Hence, the cable-length and direction tracking errors can converge exponentially to zero. To this end, the matrix inequalities $N_L \succ 0$, $W_L \succ 0$, $N_{q1} \succ 0$, and $W_q \succ 0$ must hold. By applying Sylvester's criterion \cite{horn2012matrix}, explicit conditions on $\beta_L$ and $\beta_q$ can be obtained.
\end{remark}}

\begin{theorem} \label{Theorem:overall}
    The proposed control law \eqref{Control:Fq}, \eqref{Control:fc}, \eqref{Control:Fcperp} and \eqref{Control:tau} can drive the payload
     to the desired trajectory, and adjust the cable length and direction to the desired ones,  implying the following result:
    \begin{align}
        \lim_{t \rightarrow \infty}\left[\bm e_x^\top, \bm e_v^\top, e_L, \dot{e}_L,\bm e_q^\top, \bm e_\omega^\top, \bm e_R^\top, \bm e_\Omega^\top\right]^\top
        =  \bm 0_{20\times1}.\nonumber
    \end{align}
\end{theorem}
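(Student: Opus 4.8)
The plan is to establish \emph{Theorem \ref{Theorem:overall}} by assembling the three previously proven subsystem results (\emph{Theorems \ref{Theorem:position}} and \emph{\ref{Theorem:cable}}, plus the exponential attitude convergence from the singular perturbation argument) into a cascade argument, and then closing the loop by showing that the neglected coupling term $\bm\Delta_L$ satisfies the growth restriction condition. First I would recall the estimate \eqref{Stability:NormdeltaL}, namely $\|\bm\Delta_L\|\le\|\bm F_L\|\,\|\bm e_q\|$, and note from the explicit form \eqref{Control:Fq} that $\|\bm F_L\|$ is bounded by an affine function of $\|\Tanh(\bm e_x)\|$, $\|\Tanh(\bm e_v)\|$ and the (bounded) desired-trajectory terms; since $\Tanh$ is globally bounded, $\|\bm F_L\|$ is in fact uniformly bounded along trajectories, so $\|\bm\Delta_L\|\le c\,\|\bm e_q\|$ for some constant $c>0$. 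This is the key inequality that lets the perturbation entering the $(\bm e_x,\bm e_v)$-subsystem be dominated by a vanishing quantity.

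Next I would set up the cascade structure. Treat the full error state as the interconnection of a \textbf{driven subsystem} $\Sigma_1$ with state $(\bm e_x,\bm e_v)$, whose unperturbed dynamics are globally asymptotically stable by \emph{Theorem \ref{Theorem:position}} with Lyapunov function $V_1$ from \eqref{Control:V_1}, and a \textbf{driving subsystem} $\Sigma_2$ with state $(e_L,\dot e_L,\bm e_q,\bm e_\omega,\bm e_R,\bm e_\Omega)$, which is exponentially stable on the invariant safe sets $\mathcal S_L\times\mathcal S_q$ by \emph{Theorem \ref{Theorem:cable}} together with the exponential attitude convergence. The interconnection term feeding $\Sigma_1$ from $\Sigma_2$ is exactly $\bm\Delta_L$, and by the bound above it is linearly bounded by $\|\bm e_q\|$, hence by the norm of the $\Sigma_2$-state, which decays exponentially to zero. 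I would then verify the standard hypotheses of the cascade-systems theorem of \cite{kendoul2009nonlinear,zhao2014nonlinear}: (i) $\Sigma_1$ with input set to zero is GAS; (ii) the interconnection term vanishes when the $\Sigma_2$-state is zero and is bounded (affinely in $\|\bm e_v\|$ or simply bounded, via the $\Tanh$ saturation) times a function of the driving state; (iii) trajectories of the full system are forward complete and bounded. Boundedness of $(\bm e_x,\bm e_v)$ under the perturbation would be argued by computing $\dot V_1$ along the perturbed dynamics: $\dot V_1 = -\bm e_v^\top K_d\Tanh(\bm e_v) + \tfrac{1}{m_L}\bm e_v^\top\bm\Delta_L$, and since $\|\bm e_v^\top\bm\Delta_L\|\le \|\bm e_v\|\,c\,\|\bm e_q(t)\|$ with $\|\bm e_q(t)\|\le\kappa e^{-\lambda t}$, a comparison/Gronwall estimate gives $V_1\in\mathcal L_\infty$, hence $\bm e_x,\bm e_v\in\mathcal L_\infty$.

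With boundedness in hand, I would invoke the cascade theorem to conclude $(\bm e_x,\bm e_v)\to\bm 0$: the $\Sigma_2$-state converges exponentially, so $\bm\Delta_L\to\bm 0$, and an asymptotically stable system driven by an $\mathcal L_1\cap\mathcal L_\infty$ (indeed exponentially decaying) vanishing perturbation retains convergence to the origin — either directly from the growth restriction condition cited in the paper, or by the same Barbalat-type argument used in the proof of \emph{Theorem \ref{Theorem:position}} now applied to the perturbed closed loop (noting $\dot{\bm e}_v = \bm\varepsilon_1+\bm\varepsilon_2+\tfrac{1}{m_L}\bm\Delta_L$ is still bounded, and $\int_0^\infty \bm e_v^\top K_d\Tanh(\bm e_v)\,dt<\infty$ follows from integrating the perturbed $\dot V_1$). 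Stacking the convergence of all error components then yields $[\bm e_x^\top,\bm e_v^\top,e_L,\dot e_L,\bm e_q^\top,\bm e_\omega^\top,\bm e_R^\top,\bm e_\Omega^\top]^\top\to\bm 0_{20\times1}$.

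The main obstacle I anticipate is the rigorous handling of the coupling term's effect on the slow/driven subsystem \emph{before} exponential convergence of $\Sigma_2$ has ``kicked in'': one must ensure no finite-escape-time behavior in $(\bm e_x,\bm e_v)$ despite the perturbation, and that the safe sets $\mathcal S_L,\mathcal S_q$ remain invariant under the true (coupled, non-reduced) dynamics rather than only the idealized reduced model with $R\equiv R_d$. The clean way around this is the global boundedness afforded by the saturated controller \eqref{Control:Fq}: because $\Tanh$ is bounded, $\|\bm F_L\|$ and hence $\|\bm\Delta_L\|$ admit a uniform bound independent of the state, which together with the barrier terms in $V_2,V_3$ (enforcing forward invariance of $\mathcal S_L,\mathcal S_q$) and the time-scale separation of the attitude loop, rules out escape and legitimizes the cascade decomposition. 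A secondary subtlety is confirming that the singular-perturbation reduction is uniform in the slow states over the relevant compact sets so that the ``$R\equiv R_d$'' replacement used in \emph{Theorem \ref{Theorem:cable}} is valid within the full interconnection; this is standard given the exponential stability of the boundary-layer system established via \eqref{Control:tau}.
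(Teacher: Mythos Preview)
Your proposal is correct and follows the same overall cascade-systems route as the paper: bound the coupling term $\bm\Delta_L$ via \eqref{Stability:NormdeltaL}, verify a growth restriction condition of the type in \cite{kendoul2009nonlinear,zhao2014nonlinear}, and then combine \emph{Theorems \ref{Theorem:position}} and \emph{\ref{Theorem:cable}} with the attitude result of \cite{sreenath2013geometric}. The one substantive difference is in how $\|\bm F_L\|$ is bounded. You exploit the global boundedness of $\Tanh$ to obtain a \emph{uniform} bound $\|\bm F_L\|\le c$, hence $\|\bm\Delta_L\|\le c\,\|\bm e_q\|$. The paper instead uses $|\tanh(\cdot)|<|\cdot|$ to get a linear-in-$\|\bm z_x\|$ estimate, $\|\bm F_L\|\le \sqrt{2}\max(\lambda_{\max}(K_p),\lambda_{\max}(K_d))\|\bm z_x\|+\Gamma$ with $\bm z_x=[\bm e_x^\top,\bm e_v^\top]^\top$, and then casts this explicitly in the two-regime form $\|\bm F_L\|\le k_f\|\bm z_x\|$ for $\|\bm z_x\|\ge c_f$ and $\|\bm F_L\|\le k_fc_f$ otherwise, which matches the literal template of the growth restriction condition in the cited references. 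Your uniform bound is strictly stronger and makes the subsequent boundedness/Gronwall step cleaner; the paper's version has the advantage of plugging directly into the cited cascade theorem without further remark. Your added discussion of forward completeness, invariance of $\mathcal S_L,\mathcal S_q$ under the non-reduced dynamics, and uniformity of the singular-perturbation reduction is more careful than the paper, which simply invokes the growth restriction result and the conclusions of the earlier theorems.
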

\begin{proof}
    The stability analysis of the overall closed-loop system with the consideration of the coupling term $\bm \Delta_L$
    is taken into account here.  Let the generalized payload position and velocity error vector
    as $\bm z_x : = \left[\bm e_x^\top,\bm e_v^\top\right]^\top$. 
   Furthermore, utilizing the fact  $|\tanh(\cdot)|<|\cdot| $, the  virtual control input \eqref{Control:Fq} satisfies that
\begin{align}
    \|\bm F_L\|=&\|\!- \!K_p \Tanh(\bm e_x)  \!- \!K_d \Tanh(\bm e_v) \!  +\!  m_L \ddot{\bm x}_d \!+\! m_L g\bm e_3\|\nonumber\\
\leq& \max(\lambda_{\max}(K_p),\lambda_{\max}(K_d))\left(\|\bm e_x\| +\|\bm e_v\|\right) + \Gamma\nonumber\\
\leq& \sqrt{2}\max(\lambda_{\max}(K_p),\lambda_{\max}(K_d))\|\bm z_x\|+ \Gamma\nonumber\\
\leq& \left[\frac{\Gamma}{\sqrt{2}\max(\lambda_{\max}(K_p),\lambda_{\max}(K_d))}+\|\bm z_x\|\right]\cdot\nonumber\\
&\sqrt{2}\max(\lambda_{\max}(K_p),\lambda_{\max}(K_d)),\nonumber
\end{align}
where $\Gamma = \sup\left( \|  m_L \ddot{\bm x}_d + m_L g\bm e_3\|\right)$.
Then, by selecting $k_f=2\sqrt{2}\max(\lambda_{\max}(K_p),\lambda_{\max}(K_d))$,
$c_f = \frac{\Gamma}{\sqrt{2}\max(\lambda_{\max}(K_p),\lambda_{\max}(K_d))}$, one can obtain that   $\bm F_L$
 satisfies  the following property:
\begin{align}\label{Stability:theorem3}
    \left\|\bm F_L(\bm z_x)\right\| \leq \begin{cases}
        k_f\|\bm z_x\|, & \text { for }\|\bm z_x\| \geq c_f \\
        k_f c_f, & \text { for }\|\bm z_x\|<c_f\end{cases}.
\end{align}
Subsequently, substituting the result of \eqref{Stability:theorem3} into \eqref{Stability:NormdeltaL},
the coupling term $\bm \Delta_L $ can be further deduced as:
\begin{align}
\|\bm \Delta_L\|
& \leq\|\bm F_L\|\|\bm e_q\|\nonumber\\
&\leq k_f \|\bm z_x \|\|\bm e_q\|, \text { for }\|\bm z_x\| \geq c_f,\nonumber
\end{align}
which means that the coupling term $\bm \Delta_L$ satisfies the growth restriction condition
\cite{kendoul2009nonlinear,zhao2014nonlinear}. Finally,  combining the conclusions from \emph{Theorem \ref{Theorem:position}} and \emph{Theorem \ref{Theorem:cable}},
 along with the stability analysis results on the attitude of unmanned aerial vehicles as stated in reference \cite{sreenath2013geometric},
  the proof of \emph{Theorem \ref{Theorem:overall}}  can be completed.
\end{proof}

\textcolor{blue}{
\begin{remark}
    Notably, in the presence of model uncertainties or external disturbances, a rigorous theoretical robustness guarantee for the proposed control scheme cannot yet be established. Accordingly, robustness validation is assessed by simulation in Section \ref{sec:sim4}. Future work will strengthen robustness in two directions: (i) building on robust controllers for fixed-length cable systems, such as robust integral of the sign of the error (RISE) control \cite{cai2024robust} and sliding-mode control \cite{liang2024robust}, and extending them to variable-length settings by explicitly modeling cable length dynamics and cable length dependent couplings; and (ii) adapting robust designs developed for other cascaded systems, e.g., geometric control on SE(3) \cite{lee2013nonlinear} and nested-saturation controllers \cite{naldi2016robust} for multirotors, to the  aerial transportation system with variable-length cable.
\end{remark}}
\begin{figure*}[!tp]
    \centering
        \hspace*{-10mm} 
    \subfloat[Test1: $k_1 = 0.1$, $k_2 = 100$.]{
        \includegraphics[height=2.45in]{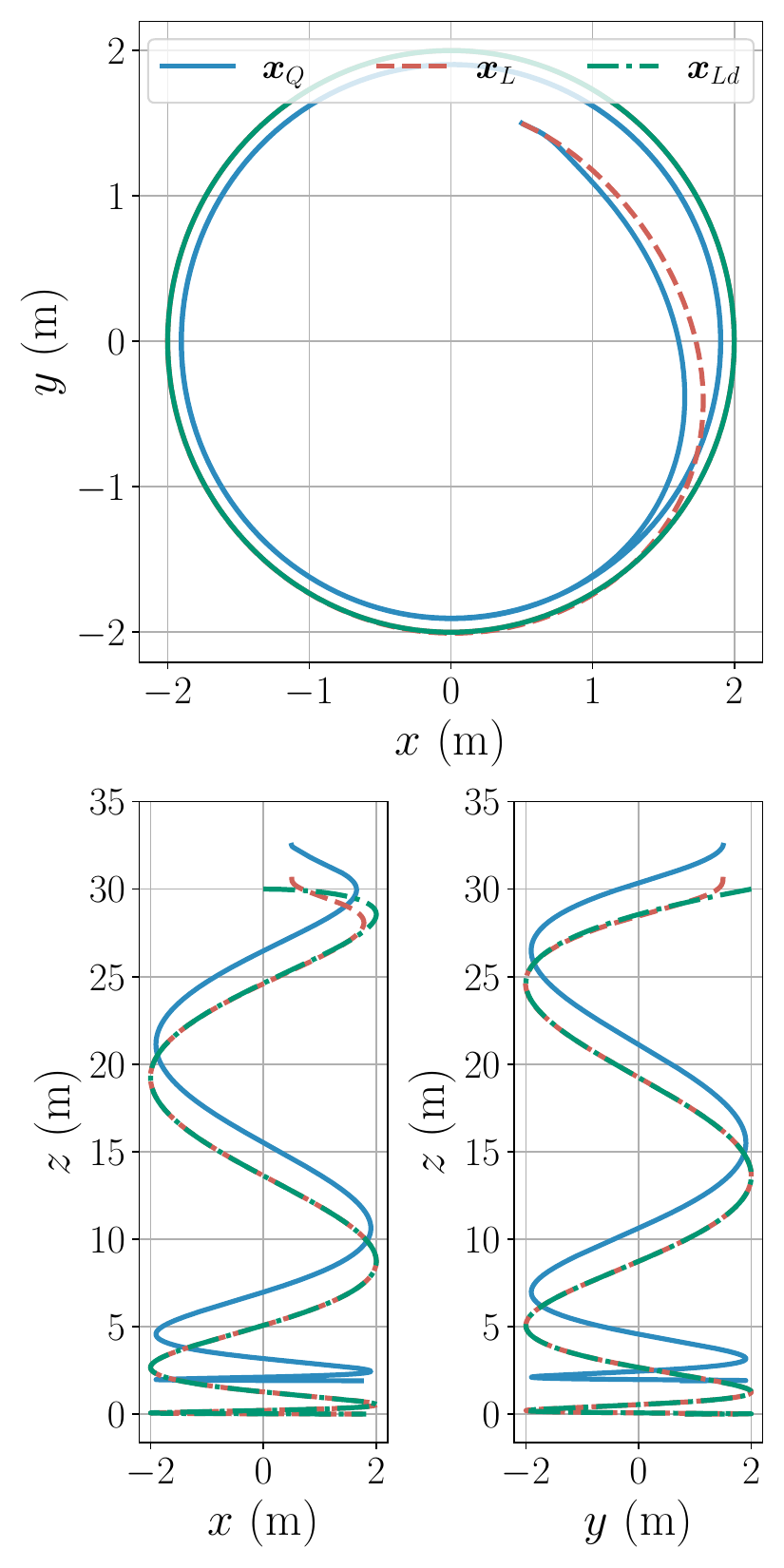}
        \includegraphics[height=2.45in]{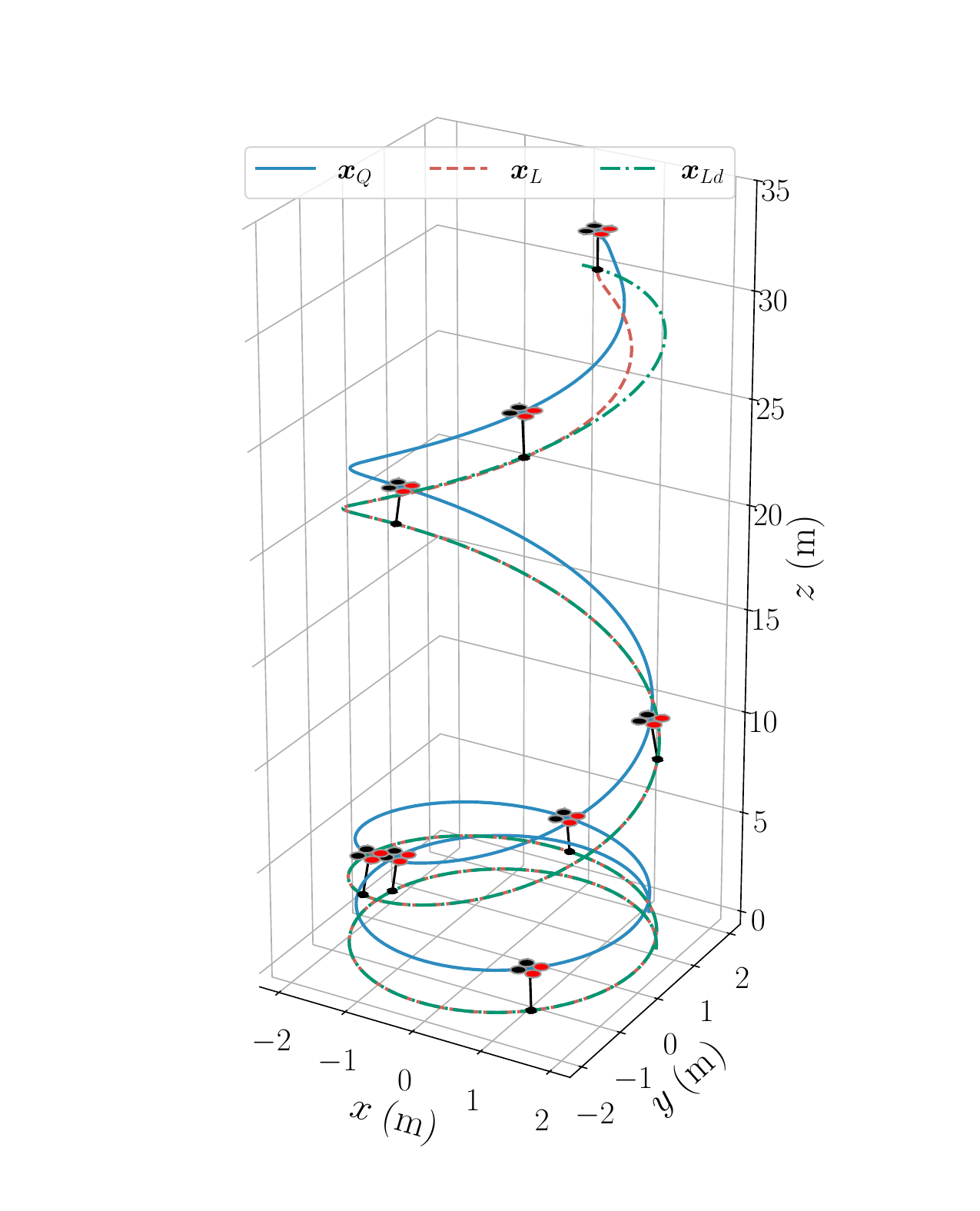}
        \includegraphics[height=2.45in]{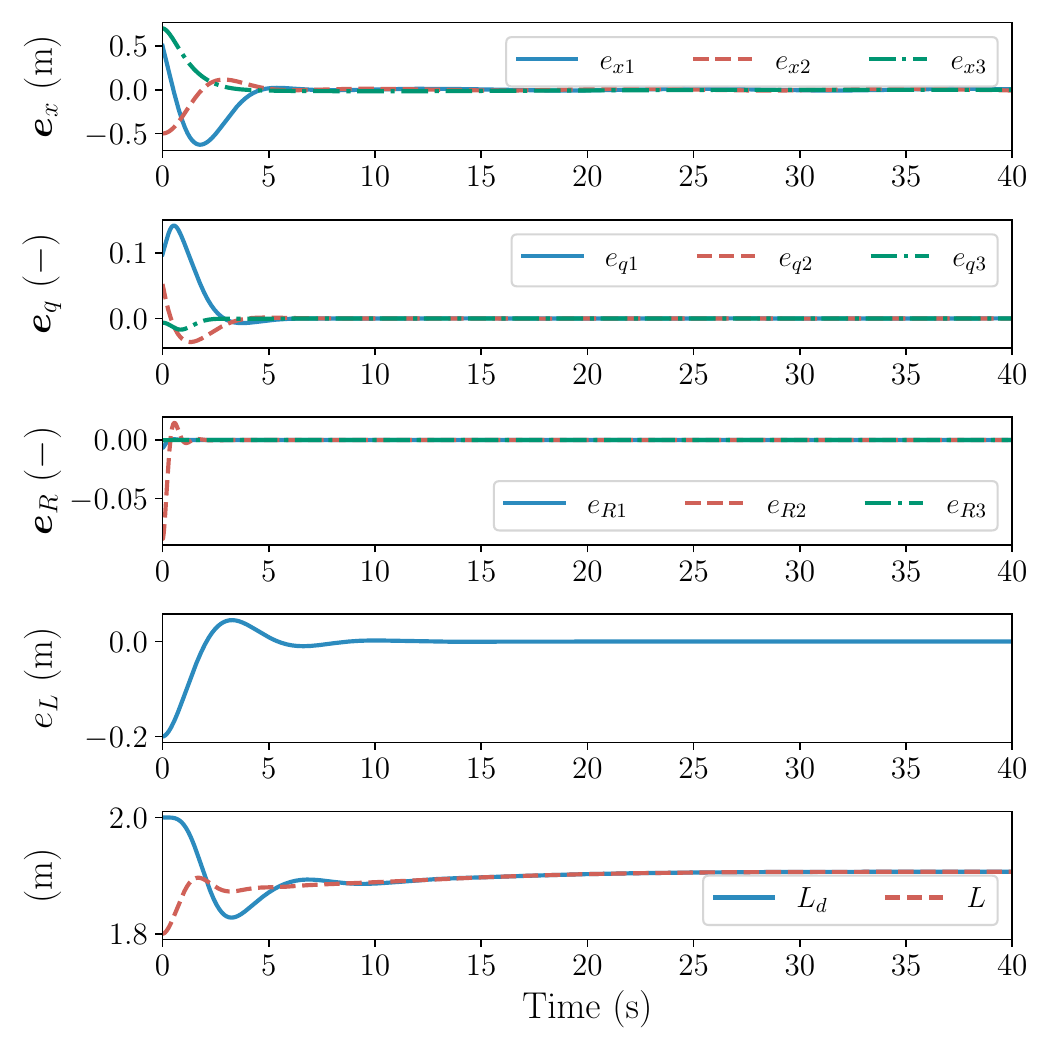}
        \includegraphics[height=2.45in]{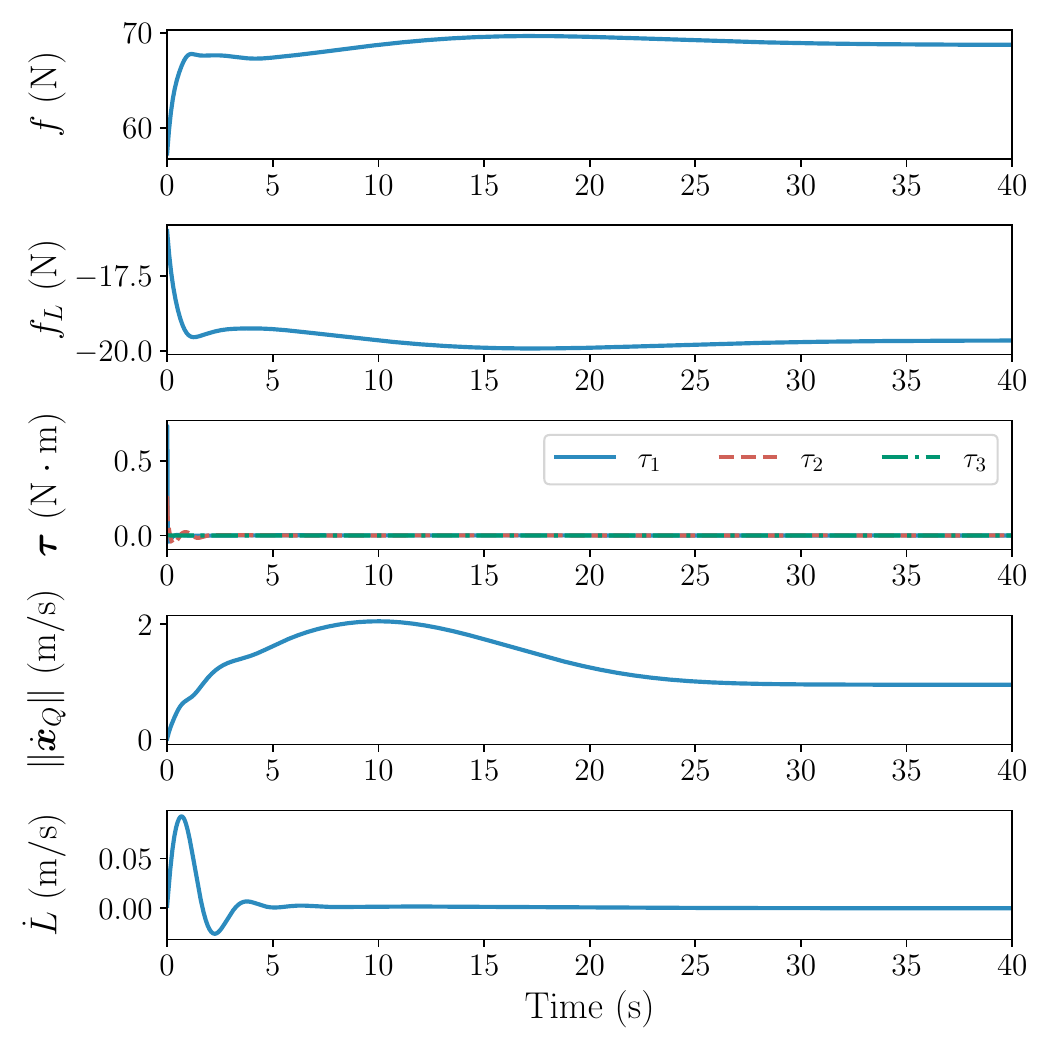}
    }\\
        \hspace*{-10mm} 
    \subfloat[Test2: $k_1 = 100$, $k_2 = 100$.]{
        \includegraphics[height=2.45in]{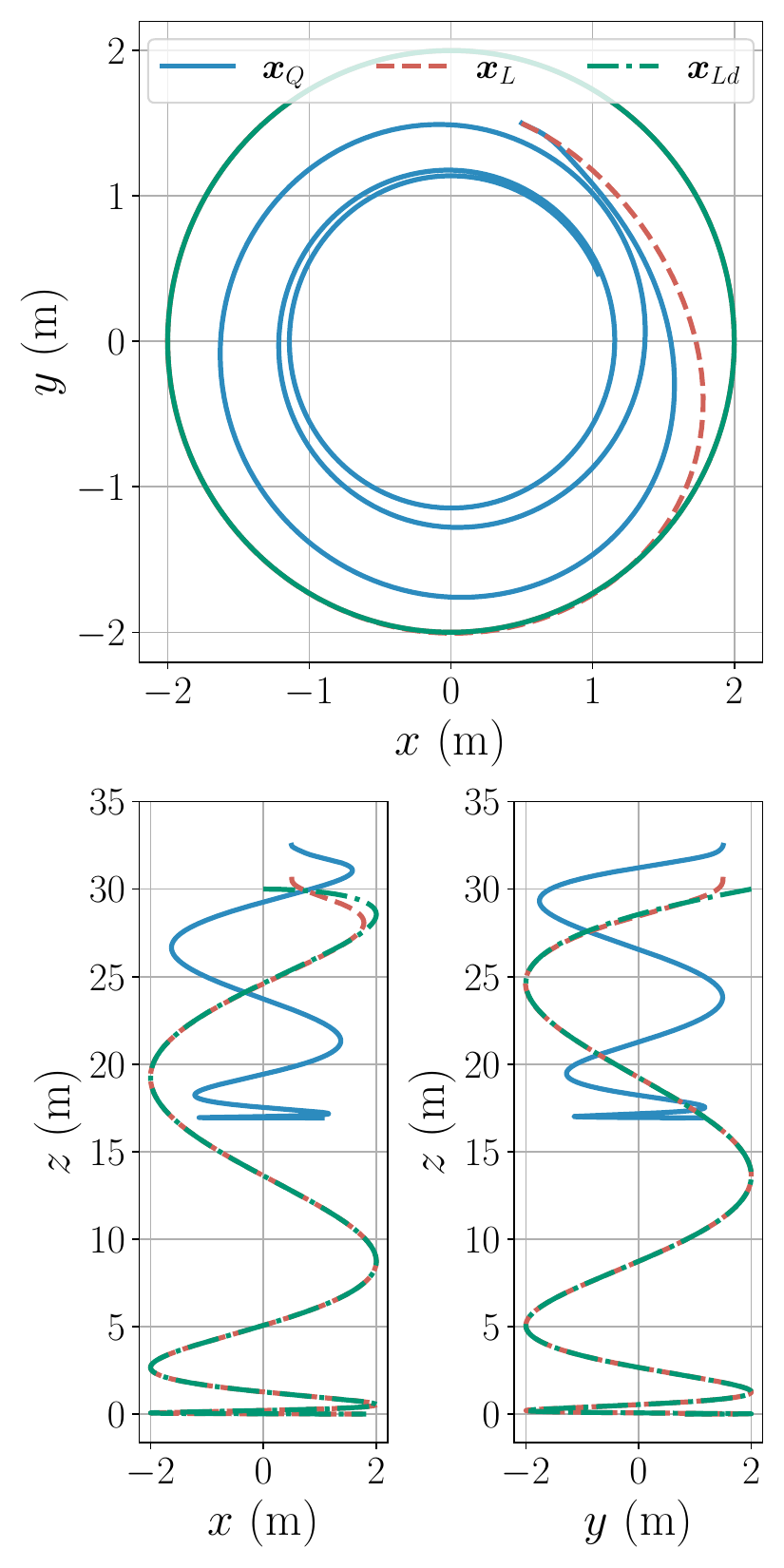}
        \includegraphics[height=2.45in]{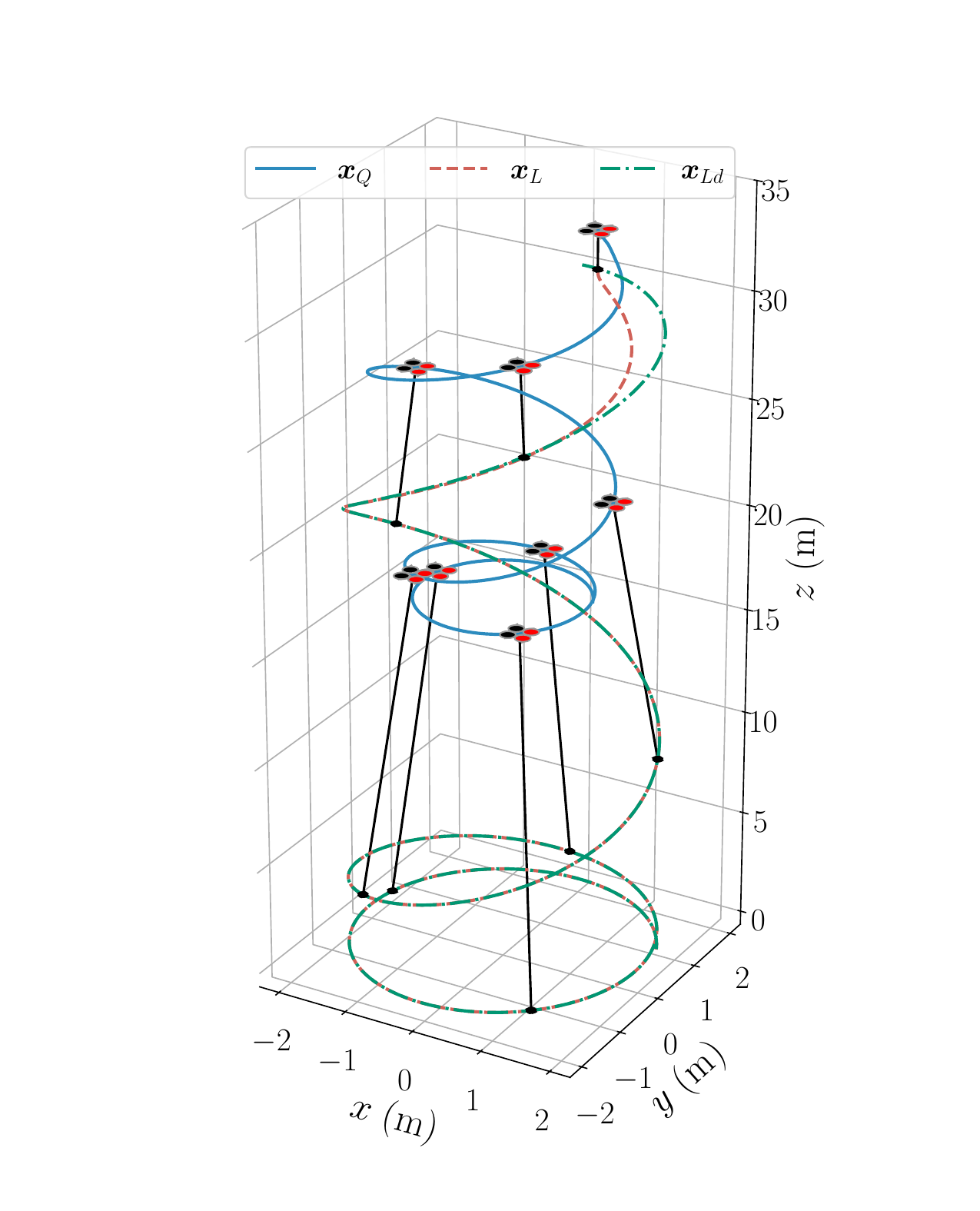}
        \includegraphics[height=2.45in]{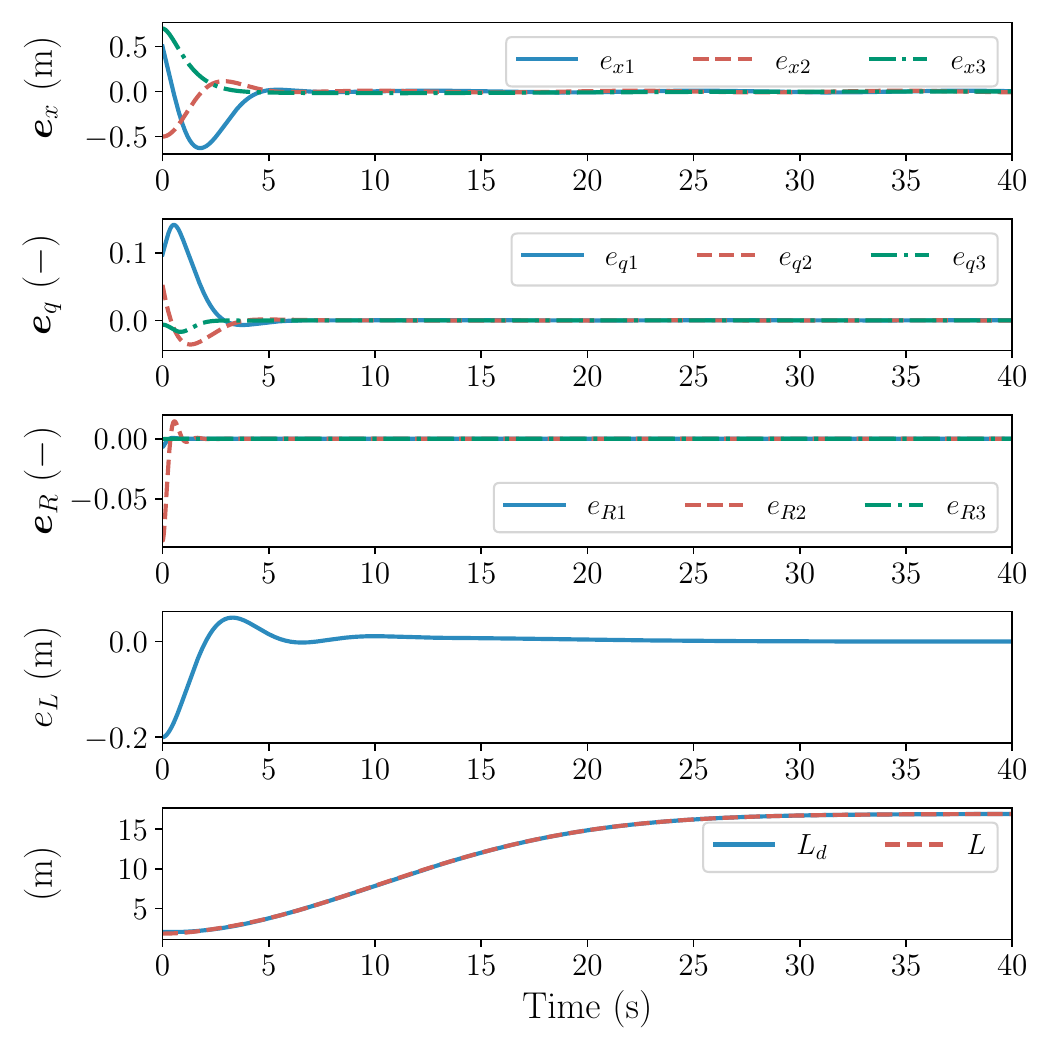}
        \includegraphics[height=2.45in]{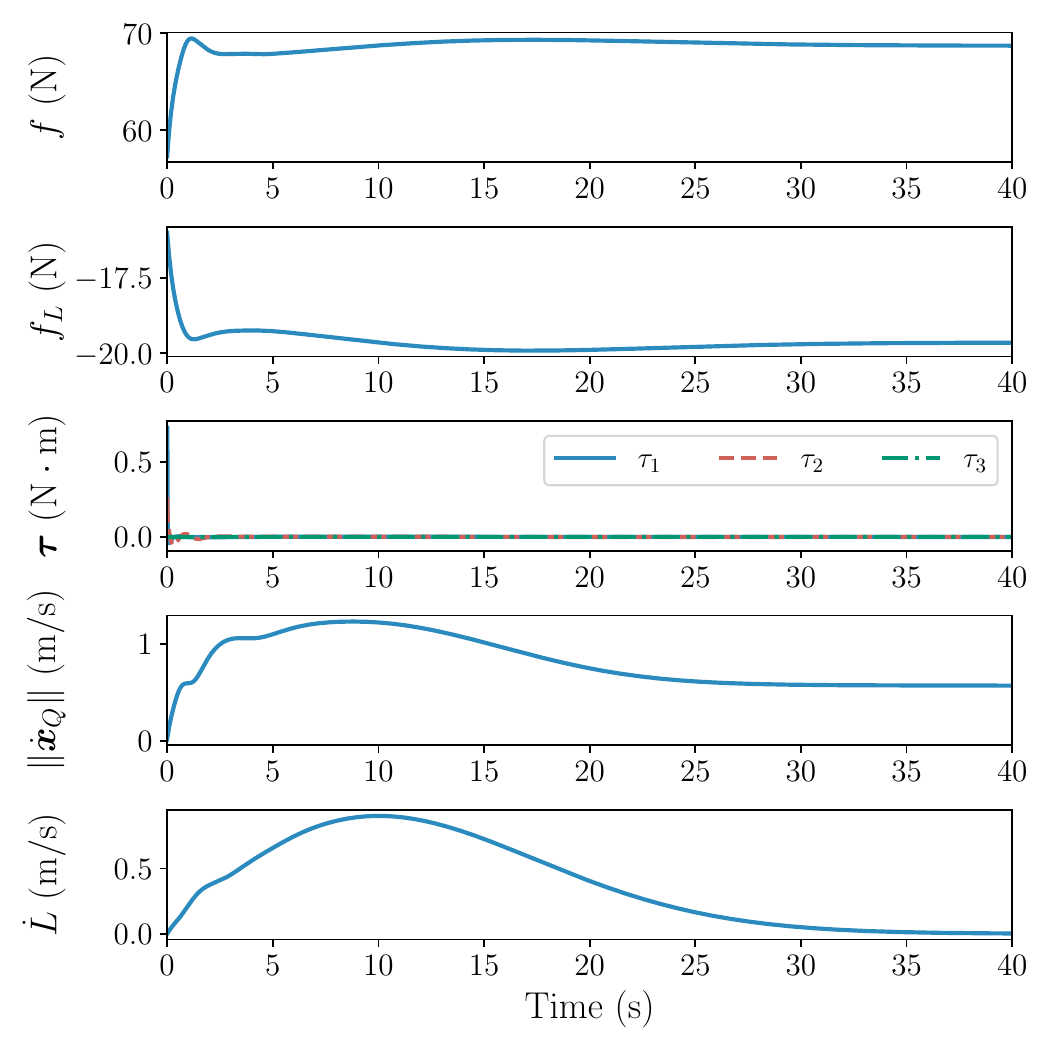}
    }\\
        \hspace*{-10mm} 
    \subfloat[Test3: $k_1 = 100$, $k_2 = 0.1$.]{   
        \includegraphics[height=2.45in]{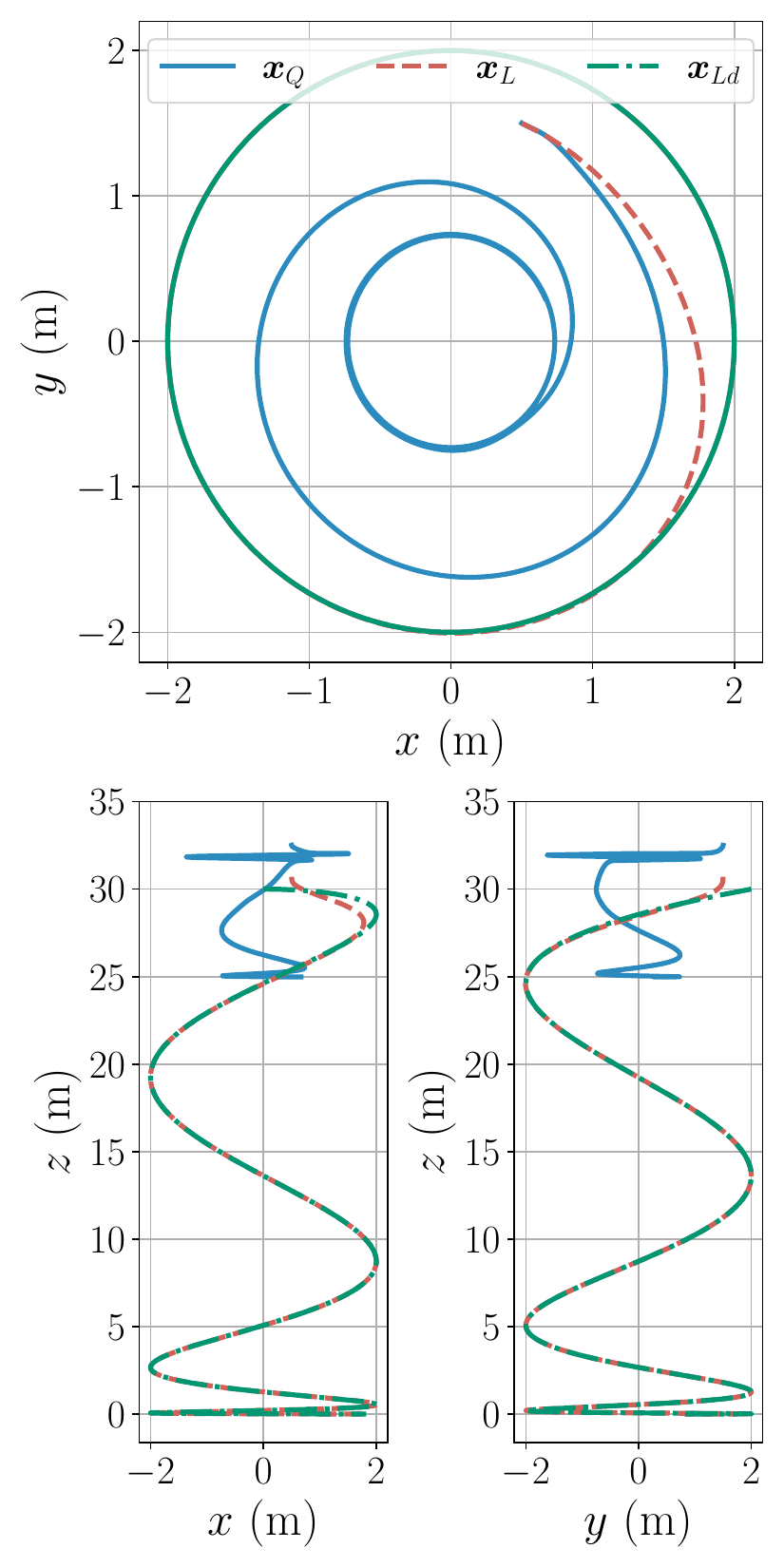}
        \includegraphics[height=2.45in]{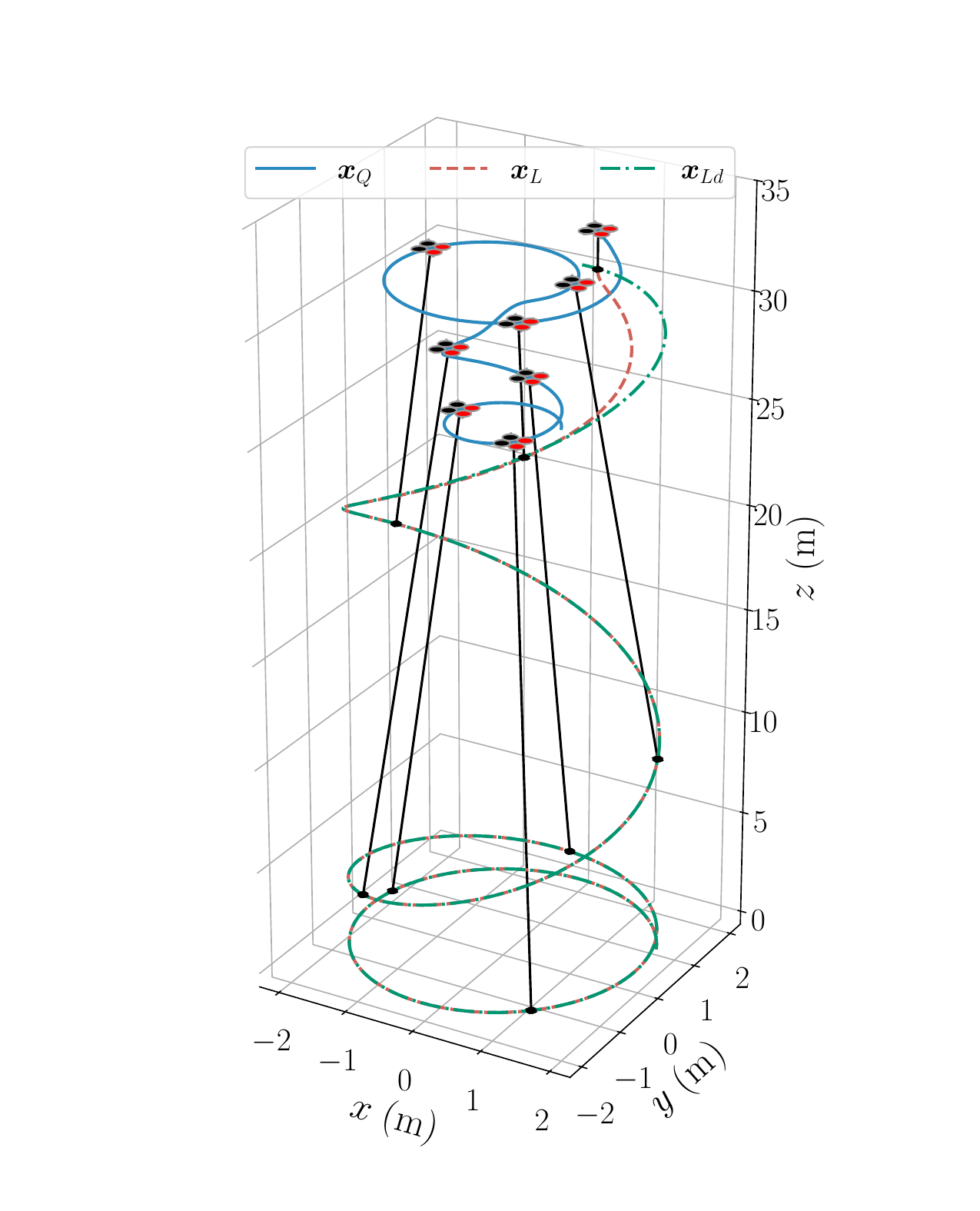}
        \includegraphics[height=2.45in]{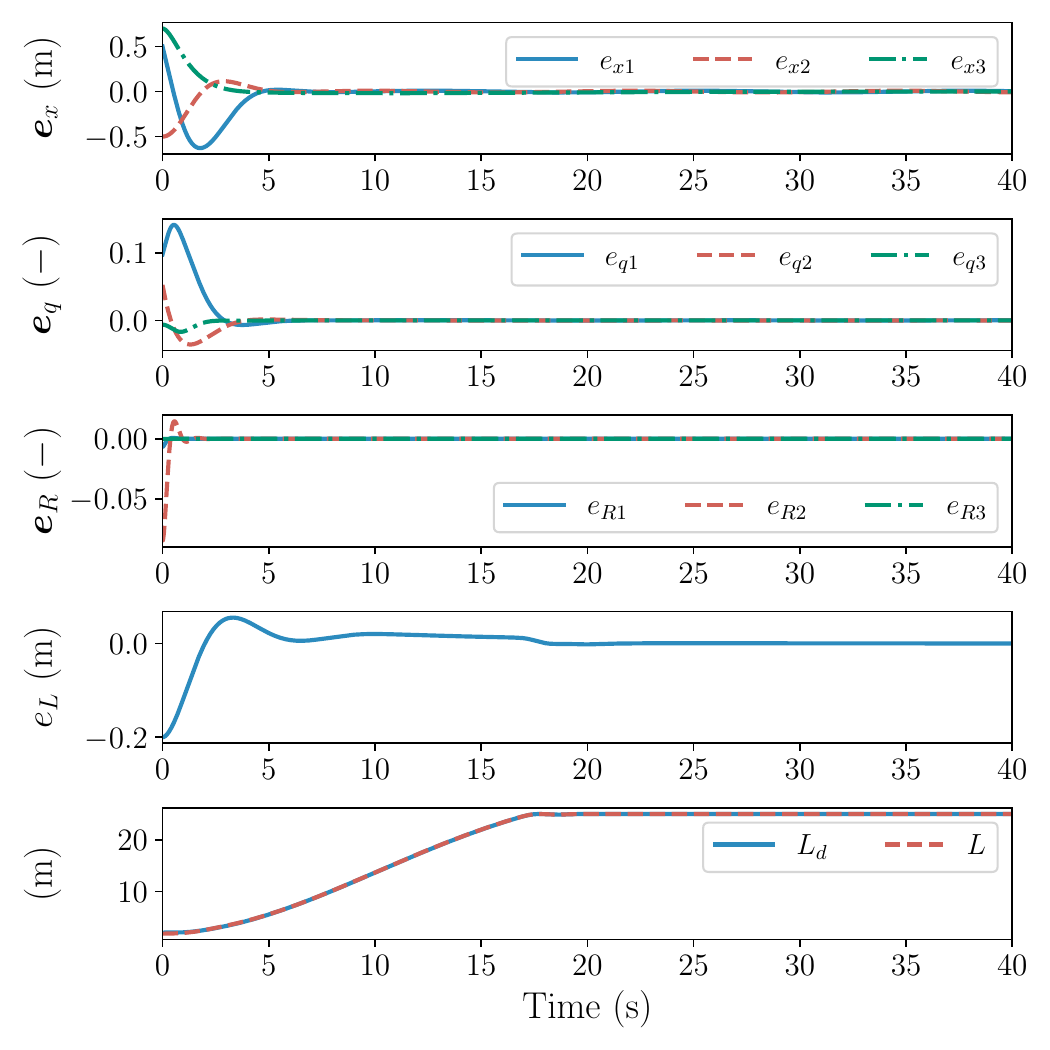}
        \includegraphics[height=2.45in]{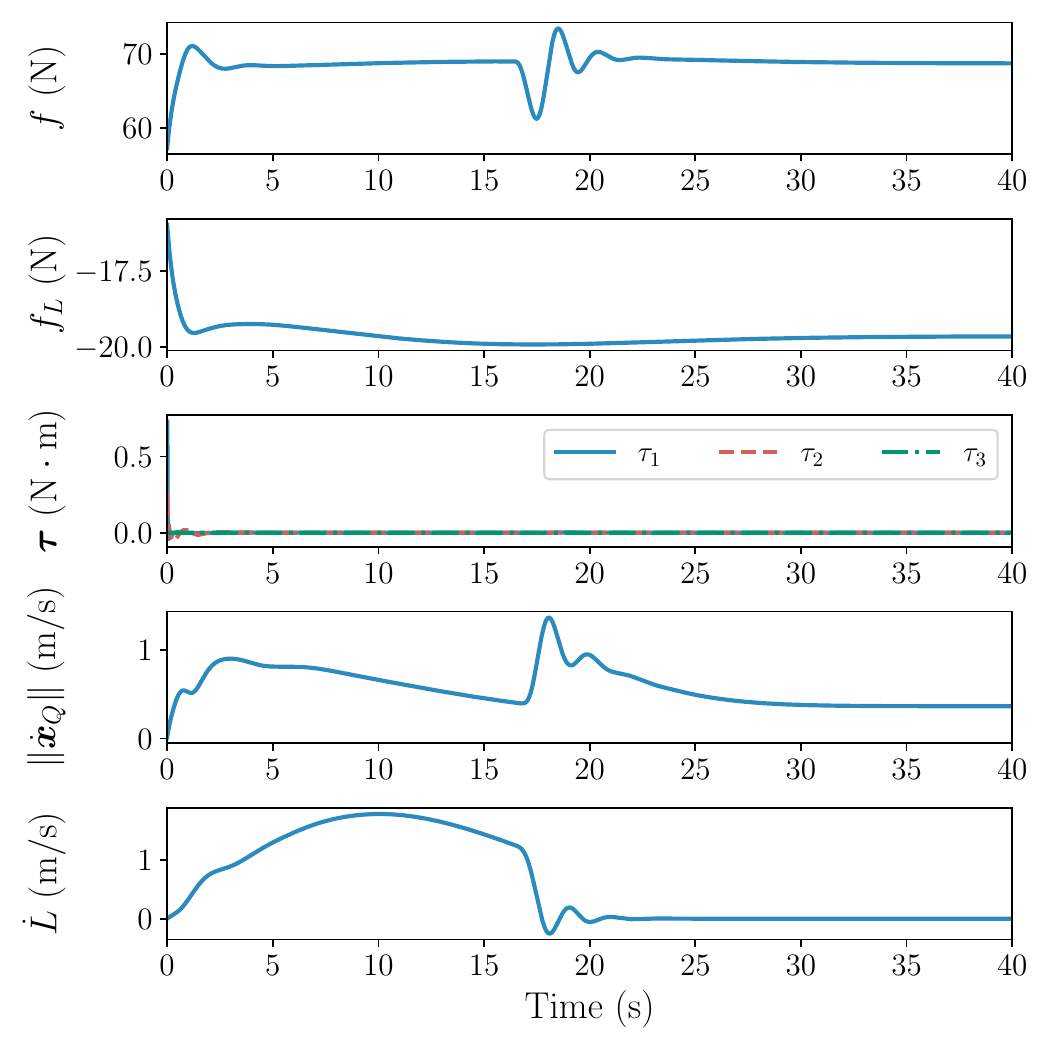}
}
    \caption{\label{fig:Sim_T1} Results for Simulation 1. (The first column presents the two-dimensional trajectories of the payload and the multirotor, while the second column displays their three-dimensional trajectories. The third column depicts the tracking errors for the payload's position, cable direction and length, and multirotor attitude, as well as the generated desired and actual cable length curves. The fourth column shows the multirotor's thrust force and torque, the payload's hoisting/lowering force, and the velocities of both the multirotor and the cable length.)}
\end{figure*}
\section{Simulation Study}\label{sec:simulation_results}

To demonstrate the potential of the solution,  several groups of simulation tests are implemented with the \texttt{acados} library \cite{verschueren2022acados} in this section.  \textcolor{blue}{All simulations are run on a desktop computer equipped with an AMD Ryzen 9 9950X CPU and 64 GB of RAM. The control frequency is 100 Hz, corresponding to a controller sampling period of $T_s = 10~\mathrm{ms}$.}
The physical parameters are selected as $m_Q = 5.0~\mathrm{kg}$, $ m_L = 2.0 ~ \mathrm{kg}$,  $ g = 9.81 \mathrm{m/s^2}$,
$J = \diag([0.05,0.05,0.06])\mathrm{kg \cdot m^2}$. 
The control gains are chosen as $ K_p = \diag([4,4,5]) $, 
$K_d = \diag([4.5, 4.5 ,6])$, $ k_{pl} = 4$, $k_{dl} = 4.5$, $k_q = 1.8$, 
$k_\omega = 1.9$, $k_a = 0.5$, $k_b = 0.5$, $\iota = 0.25$, $ \varrho= \sqrt{0.1}$, $\frac{1}{\epsilon^{2}}k_R = 1.92$, $\frac{1}{\epsilon}k_\Omega = 0.3$.

\subsection{Simulation 1: Basic Performance Validation}
The initial position of the payload is set as $\bm x_L(0) = [0.5, 1.5, 32.5]^\top ~\mathrm{m}$, the initial direction of the cable is $ \bm  q(0) = [0,0,-1]^\top$, and the initial cable length is $L(0) = 1.8 ~\mathrm{m}$. The reference payload trajectory is chosen as 
$\bm x_{Ld} = \left[2\sin(0.5t),2\cos(0.5t),30\exp(-0.005t^2)\right]^\top~\mathrm{m}.$
In this subsection, we focus on the basic performance validation,  and the function $\ell$ is selected to manage the motion of the multirotor with the variation of the cable-length, specifically as: 
\begin{align}
    \ell = k_1 \|\dot{\bm x}_Q\|^2 +k_2 \dot{L}^2 +  \bm L_t^\top K \bm L_t + k_8 {L_d^{(5)}}^2,
\end{align}
where $K = \diag([k_3,k_4,k_5,k_6,k_7])$ and  $k_i, i = 1, ..., 8$ are the  weighting factors.
Three tests are conducted by choosing different weighting factors $k_1 $ and $k_2$, i.e., 
\begin{enumerate}
    \item Test 1:  $k_1 = 0.1$, $k_2 = 100$; 
    \item Test 2:  $k_1 = 100$, $k_2 = 100$;
    \item Test 3:  $k_1 = 100$, $k_2 = 0.1$.
\end{enumerate} 
The rest weighting factors are set the same, i.e., $ K = \diag([0, 1.6,3.2,2.4,0.8])$, $k_8 =0.1$. 
The upper and lower bounds of $\bm L_t$ and $L_d^{(5)}$ are set as $\underline{\bm L}_t = [0.5, -30, -30, -30, -30]^\top$, $\overline{\bm L}_t = [25, 30,30,30,30]^\top$, $\underline{L}_d^{(5)} = -50$, $\overline{L}_d^{(5)} = 50$. 

The simulation results are depicted in Fig. \ref{fig:Sim_T1}. 
The first and second columns illustrate the  two-dimensional and  three-dimensional trajectories of the payload and the multirotor, respectively. 
Analysis of the three tests reveals that when the weight of the cable length velocity in the cost function is high, there is minimal change in the cable length, effectively maintaining a fixed length to achieve precise trajectory tracking of the payload.
Conversely, when the weight of the multirotor velocity is increased, the multirotor exhibits reduced motion variability, and the system compensates by extending the cable length to ensure accurate payload trajectory tracking.
Notably, as demonstrated in Test 3, when constraints are triggered, i.e., the cable reaching its maximum length, the payload trajectory tracking is further achieved by reducing the multirotor's altitude. 
The third column presents the tracking errors for payload position, cable direction and length, and multirotor attitude, along with the generated desired and actual cable length curves. 
Despite initial nonzero errors, all error signals converge to zero within approximately  $5~\mathrm{s}$, validating the efficacy of the proposed control method in enabling the aerial transportation system with variable-length cable to accurately follow the desired trajectory.
The fourth column displays the multirotor thrust force and torque, payload hoisting/lowering force, and the velocities of both the multirotor and cable length.  These curves distinctly illustrate how the multirotor's movement and the variation in cable length are influenced by different weighting factors.
\textcolor{blue}{Table~\ref{tab:acados_time_sim1} reports per-step solve-time statistics including mean, maximum, and standard deviation (std) values. 
Across the three tests, the mean solve time is \(4.931\text{-}5.228~\mathrm{ms}\) (\(49.3\%\text{-}52.3\%\) of \(T_s\)), with std \(0.065\text{-}0.487~\mathrm{ms}\) indicating low jitter overall and a larger variability in Test~3. 
The maximum solve times are \(6.104~\mathrm{ms}\) (Test~2), \(7.710~\mathrm{ms}\) (Test~1), and \(7.753~\mathrm{ms}\) (Test~3), corresponding to utilizations of \(61.0\%\), \(77.1\%\), and \(77.5\%\) of the sampling period \(T_s\), respectively, thus leaving timing headroom of \(3.896~\mathrm{ms}\), \(2.290~\mathrm{ms}\), and \(2.247~\mathrm{ms}\).
These results indicate sufficient real-time margin on the stated hardware.}

\begin{table}[!h]
\color{blue} 
\centering
\caption{\texttt{acados} per-step solve time of Simulation 1.}
\label{tab:acados_time_sim1}
\renewcommand{\arraystretch}{0.85}   
\begin{tabular}{l|ccc}
\toprule
  & mean $(\mathrm{ms})$  & $\max$ $(\mathrm{ms})$ & std $(\mathrm{ms})$ \\
\midrule
Test 1 & 5.114 & 7.710 & 0.065  \\
Test 2 & 4.931 & 6.104 & 0.073 \\
Test 3 & 5.228 & 7.753 & 0.487 \\
\bottomrule
\end{tabular}
\end{table}

\begin{figure}[!tp]
    \centering
    \subfloat[The multirotor altitude, cable length, multirotor velocity, and cable length velocity.\label{fig:T1_quad_alt_velocity}]{   
        \includegraphics[width=3.0in]{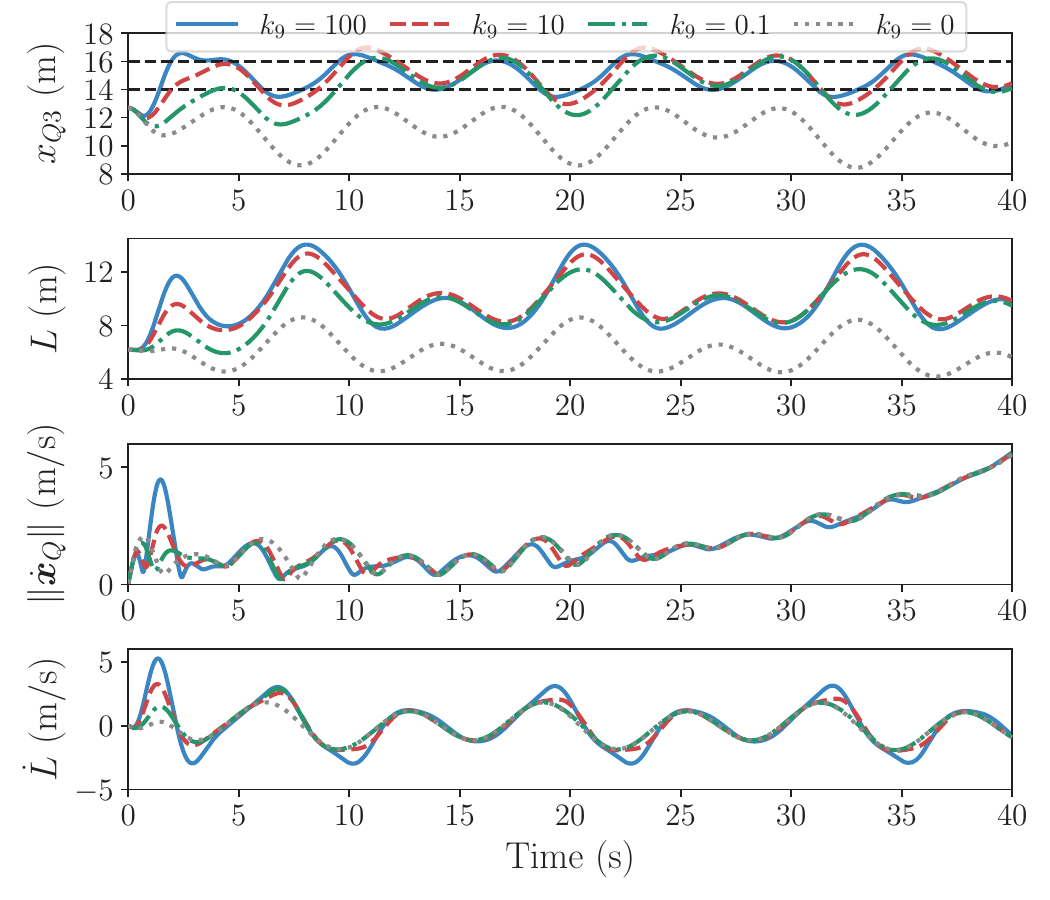}
    }\\ \vspace{-0.2cm}
    \subfloat[The tracking errors of the payload position, cable direction, multirotor attitude, and cable length.\label{fig:T1_quad_alt_error}]{
    \includegraphics[width=3.0in]{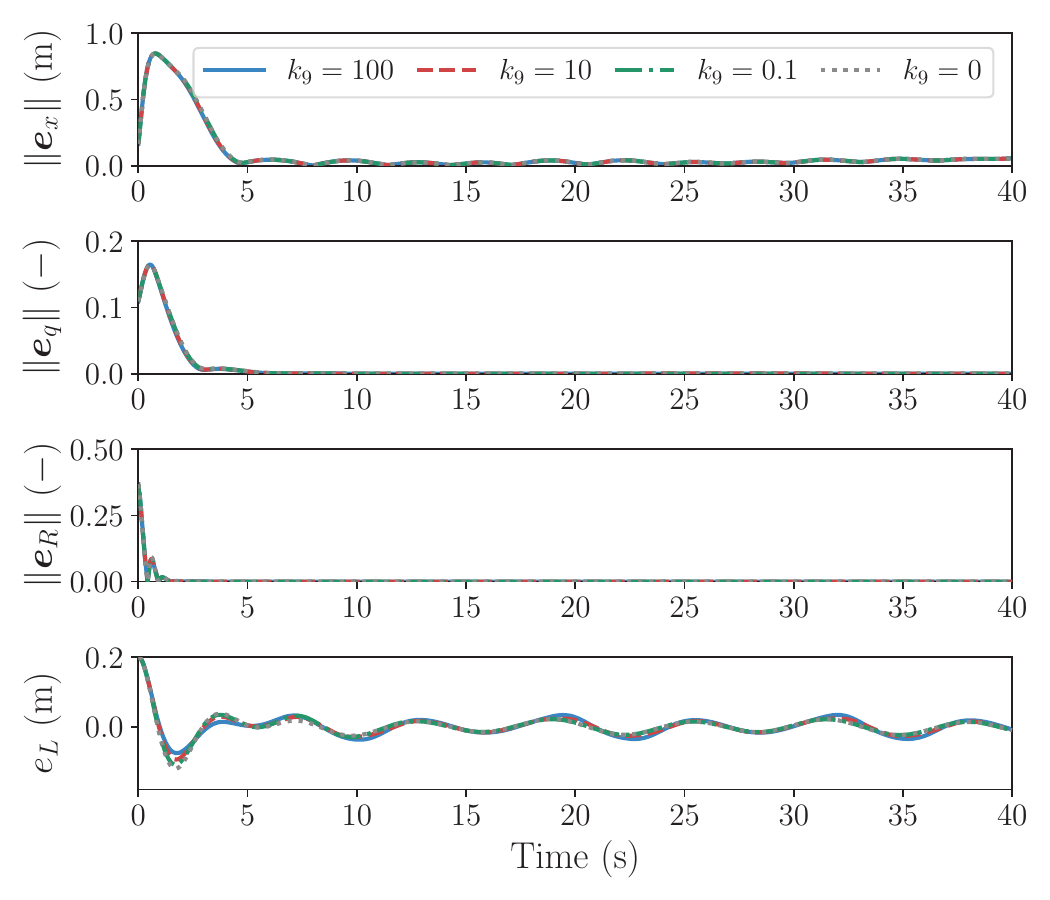}
    }
    \caption{\label{fig:Sim2_T1} Results for Simulation 2 Test 1.}
\end{figure}
\begin{figure}[!tp]
    \centering
    \subfloat[The multirotor altitude, cable length, multirotor velocity, and cable length velocity.\label{fig:T2_quad_alt_velocity}]{   
        \includegraphics[width=3.0in]{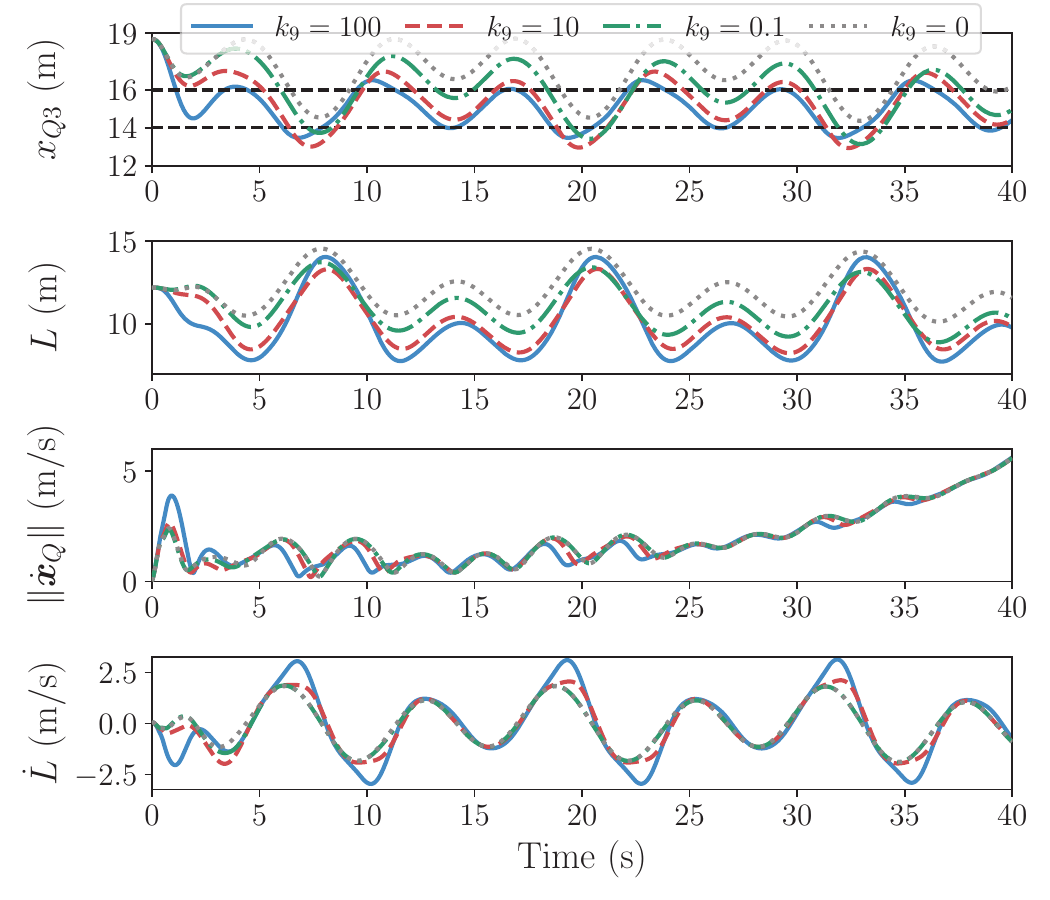}
    }\\ \vspace{-0.2cm}
    \subfloat[The tracking errors of the payload position, cable direction, multirotor attitude, and cable length.\label{fig:T2_quad_alt_error}]{
    \includegraphics[width=3.0in]{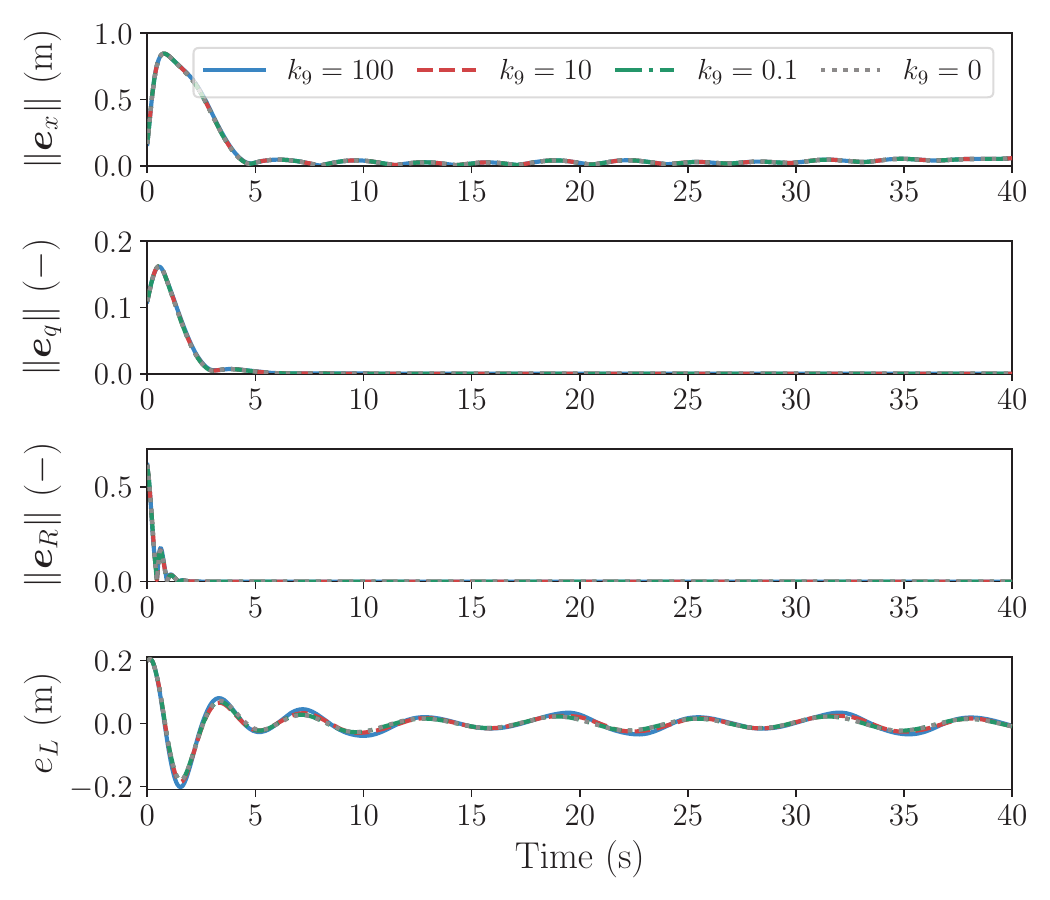}
    }   
    \caption{\label{fig:Sim2_T2} Results for Simulation 2 Test 2.}
\end{figure}
\subsection{Simulation 2: Performance Validation Considering Multirotor Altitude}
During the execution of transportation tasks, it is typically  essential to maintain the multirotor within a specified safe altitude range.  Therefore, to further assess the performance of the cable length generator, a soft constraint on multirotor's altitude is implemented, based on the cost function detailed in the previous subsection, specifically as follows:
\begin{align}
    \ell \!=\! k_1 \|\dot{\bm x}_Q\|^2 +\!k_2 \dot{L}^2 +\!  \bm L_t^\top K \bm L_t + \!k_8 {L_d^{(5)}}^2 +\! k_9 \ell_{\text{alt}},
\end{align}
where $\ell_{\text{alt}} = \exp(k_z(\underline{z_Q}-x_{Q3})) + \exp(k_z(x_{Q3}-\overline{z_Q}))$, $\underline{z_Q}$ and $\overline{z_Q}$ are the expected lower and upper bounds of the multirotor's altitude, and $k_z$ is a positive constant. 
The reference payload trajectory is chosen as 
\begin{align*}
\bm x_{Ld} = \begin{bmatrix}
    \exp(0.1t) \\
    \cos(0.5t) + \sin(0.3t + \frac{\pi}{2}) \\
    2\sin(0.5 t+ \frac{\pi}{4}) +  3\cos( t + \frac{\pi}{2}) +5
    \end{bmatrix}~\mathrm{m}.
\end{align*}
Two sets of tests are conducted by choosing different initial multirotor positions and cable lengths, i.e.,
\begin{enumerate}
    \item Test 1: $x_{Q}(0) = [0.9, 2.1, 12.7]^\top~\mathrm{m}$, $L(0) = 6.2~\mathrm{m}$;
    \item Test 2: $x_{Q}(0) = [0.9, 2.1, 18.7]^\top~\mathrm{m}$, $L(0) = 12.2~\mathrm{m}$.
\end{enumerate}
The target altitude bounds for the multirotor are set between $\underline{z_Q} = 14$ and $\overline{z_Q} = 16$. For both sets of simulations, the initial cable direction is established as $\bm q (0)= [0,0,-1]^\top$. 
Weighting factors are chosen as $k_1 = 100$, $k_2 = 100$, $ K = \diag([0, 1.6,3.2,2.4,0.8])$, $k_8 =0.1$. 
The upper and lower bounds of $\bm L_t$ and $L_d^{(5)}$ are set as $\underline{\bm L}_t = [0.5, -60, -60, -60, -60]^\top$, $\overline{\bm L}_t = [55, 60,60,60,60]^\top$, $\underline{L}_d^{(5)} = -500$, $\overline{L}_d^{(5)} = 500$. 
\begin{table}[!h]
\color{blue} 
\centering
\caption{\texttt{acados} per-step solve time of Simulation 2.}
\label{tab:acados_time_sim2}
\renewcommand{\arraystretch}{1.0}   
\resizebox{0.85\columnwidth}{!}{
\begin{tabular}{lc|ccc}
\toprule
 &Value of $k_9$ & mean (ms) & $\max$ (ms) & std (ms) \\
\midrule
\multirow{4}{*}{Test 1} & 100 & 4.878 & 7.605 & 0.197 \\
                        & 10  & 4.902 & 6.446 & 0.050 \\
                        & 0.1 & 4.922 & 6.187 & 0.059 \\
                        & 0   & 4.895 & 6.472 & 0.135 \\
\hline
\multirow{4}{*}{Test 2} & 100 & 4.682 & 6.513 & 0.093 \\
                        & 10  & 5.126 & 6.652 & 0.152 \\
                        & 0.1 & 4.913 & 5.989 & 0.141 \\
                        & 0   & 4.935 & 5.799 & 0.038 \\
\bottomrule
\end{tabular}}
\end{table}
The simulation results are displayed in Fig. \ref{fig:Sim2_T1} and Fig. \ref{fig:Sim2_T2}.   In each set of simulation, the values of $k_9$ are selected as $100, 10, 0.1$, and $0$ to verify the effectiveness of the cable length generator. 
The results indicate that the term $\ell_{\text{alt}}$ effectively  imposes a constraint on the altitude of the multirotor, with a larger value of $k_9$ leading to a more pronounced constraint on the multirotor's altitude. By comparing Fig. \ref{fig:T1_quad_alt_velocity} and Fig. \ref{fig:T2_quad_alt_velocity}, it can be seen that regardless of whether the multirotor's initial altitude is above or below the specified range,  the incorporation of the cost $\ell_{\text{alt}}$ enables the designed cable length generation scheme to effectively regulate the multirotor's altitude to approach the desired range. Furthermore, Fig. \ref{fig:T1_quad_alt_error} and Fig. \ref{fig:T2_quad_alt_error} demonstrate that under the proposed control scheme, the system errors can quickly converge to zeros. \textcolor{blue}{Table \ref{tab:acados_time_sim2} summarizes the per-step \texttt{acados} solve time in Simulation 2 for four values of $k_9$. 
Across all runs, the mean solve time lies in \(4.682\text{-}5.126\,\mathrm{ms}\) (46.8-51.3\% of \(T_s\)), while the maxima are \(5.799\text{-}7.605\,\mathrm{ms}\) (58.0-76.1\% of \(T_s\)), leaving at least \(2.395\,\mathrm{ms}\) of slack. These timings comfortably meet the 100 Hz real-time requirement on the stated hardware.}

\begin{figure}[t]
    \centering
    \vspace{-0.2cm}
    \subfloat[The tracking errors of the payload position, cable direction, multirotor attitude, and cable length.]{   
        \includegraphics[width=3.2in]{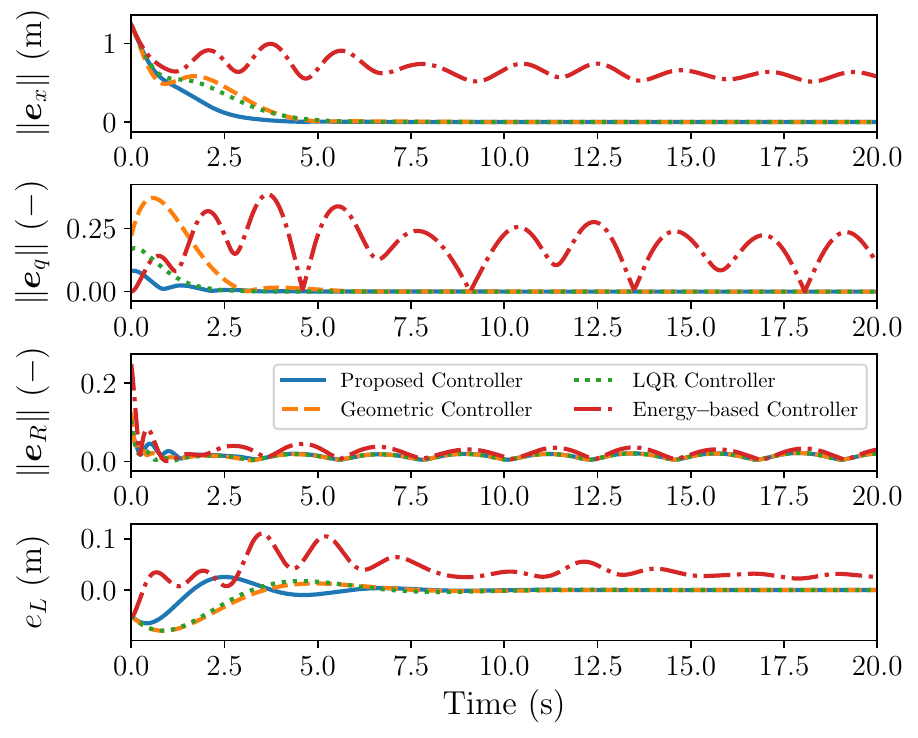}
    }\\ \vspace{-0.2cm}
    \subfloat[The control inputs.]{
    \includegraphics[width=3.2in]{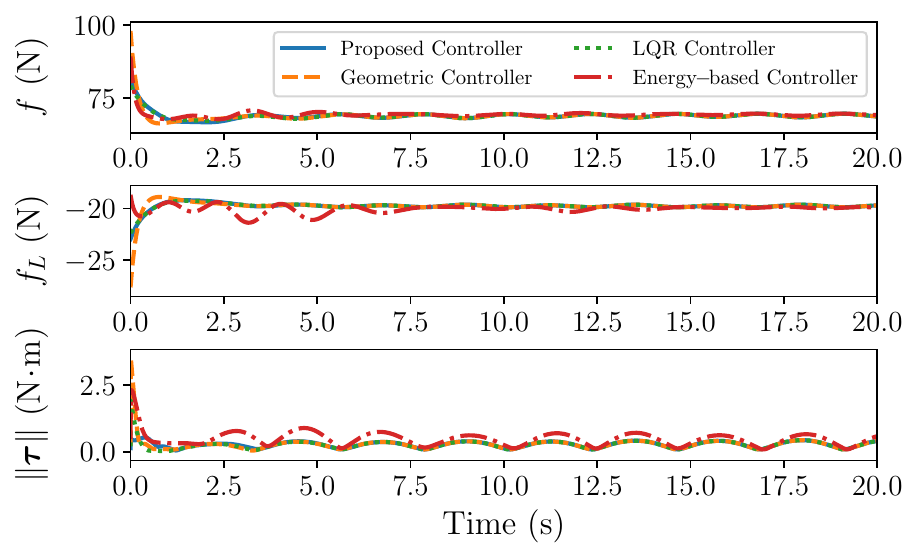}
    }
    \caption{\label{fig:Sim3} Results for Simulation 3.}
\end{figure}
\begin{table*}[t]
\color{blue} 
\centering
\caption{Quantitative results for Simulation 3. RMSE and mean are computed over 20\,s. $t_{\mathrm{conv}}$ is the first time the metric stays within the tolerance for $10$\,s
(0.05 m for $\|\bm e_x\|$, 0.01 m for $|e_L|$, 0.02 for $\|\bm e_q\|$, and 0.02 for $\|\bm e_R\|$).}
\setlength{\tabcolsep}{8pt}
\resizebox{0.99\textwidth}{!}{\renewcommand{\arraystretch}{1.2}
\begin{tabular}{ccccccccccccccc}
\toprule
\multirow{2}{*}{Method} &
\multicolumn{4}{c}{RMSE} &
\multicolumn{4}{c}{mean} &
\multicolumn{4}{c}{$t_{\mathrm{conv}}$ [s]} &
\multirow{2}{*}{Runtime [s]} &
\multirow{2}{*}{\makecell[c]{Average step \\ time [ms]}}\\
\cmidrule(lr){2-5}\cmidrule(lr){6-9}\cmidrule(lr){10-13}
& $\|\bm e_x\|$ [m] & $\|\bm e_q\|$ [--] & $\|\bm e_R\|$ [--] & $e_L$ [m]
& ${\|\bm e_x\|}$ [m] & ${\|\bm e_q\|}$ [--] & ${\|\bm e_R\|}$ [--] & ${|e_L|}$ [m]
& $t_{\|\bm e_x\|}$ & $t_{\|\bm e_q\|}$ & $t_{\|\bm e_R\|}$ & $t_{|e_L|}$
& & \\
\midrule
$\mathrm{Proposed\ Controller}$
& \textbf{0.2030} & \textbf{0.0128} & 0.0164 & \textbf{0.0152}
& \textbf{0.0658} & \textbf{0.0037} & 0.01466 & \textbf{0.0063}
& \textbf{3.12} & \textbf{1.61} & 1.15 & \textbf{3.37}
& 2.595 & 1.297 \\
$\mathrm{Geometric\ Controller}$
& 0.2406 & 0.0934 & 0.0159 & 0.0239
& 0.0992 & 0.0323 & 0.0138 & 0.0106
& 4.34 & 2.79 & 0.62 & 5.78
& 2.515 & 1.257 \\
$\mathrm{LQR\ Controller}$
& 0.2371 & 0.0322 & \textbf{0.0153} & 0.0234
& 0.0952 & 0.0090 & \textbf{0.0135} & 0.0107
& 4.45 & 1.82 & \textbf{0.49} & 5.78
& 2.579 & 1.289 \\
$\mathrm{Energy\!\!-\!\!based\ Controller}$
& 0.6884 & 0.2017 & 0.0311 & 0.0450
& 0.6773 & 0.1838 & 0.0250 & 0.0397
& -- & -- & -- & --
& 1.576 & 0.787 \\
\bottomrule
\end{tabular}}
\label{tab:quantitative_comparison}
\end{table*}
\textcolor{blue}{\subsection{Simulation 3: Comparison with Other Control Methods}
To validate the control performance of the proposed method, a comparative simulation is conducted with the following three methods: the geometric controller \cite{zeng2019geometric}, the LQR controller, and the energy-based controller \cite{yu2023adaptive}. The gains of the controller in \cite{zeng2019geometric} are selected as $k_x = 3.0$, $k_v=3.5$, $ k_L = 4.0$, $k_{\dot{L}} = 4.5$, $k_q = 1.8$, $k_{\dot{q}} = 1.9$. The state weighting matrices of the LQR controller are $Q_x = \mathrm{diag}([12.0, 12.0, 12.0, 8.0, 8.0, 8.0])$, $Q_q = \mathrm{diag}([10.0, 10.0, 10.0, 4.0, 4.0, 4.0])$ and  $Q_L = \mathrm{diag}([10.0, 4.0])$, and the input weighting matrices are $R_x = \mathrm{diag}([1.0, 1.0, 1.0])$, $R_q = \mathrm{diag}([0.5, 0.5, 0.5])$, $R_L = 0.5$. The gains of the controller in \cite{yu2023adaptive} are choosen  $K_p = \mathrm{diag}([5.0, 5.0, 8.0, 10.0])$ and $K_d = \mathrm{diag}([6.0, 6.0, 9.0, 6.0])$.   The reference payload trajectory is selected as a figure-8 curve with a varying cable length, i.e., 
$\bm x_{Ld}= [1.5\sin(0.7t),$ $ 1.5\sin(0.7t)\cos(0.7t), 2.0]^\top~\mathrm{m}$,
$L_d = 1.85 + 0.3\sin(0.25t)~\mathrm{m}$.
To implement the method of \cite{yu2023adaptive}, the desired multirotor trajectory is defined as \(\bm x_{Qd}=\bm x_{Ld}-L_d[0,0,-1]^\top\).
The initial conditions are \(\bm x_L(0)=[0.8,0.5,1.2]^\top~\mathrm{m}\), \(\bm q(0)=[0,0,-1]^\top\), and \(L(0)=1.85~\mathrm{m}\).
The results in Fig.~\ref{fig:Sim3}, Table~\ref{tab:quantitative_comparison} and Table~\ref{table:quantitative_reduction} indicate that the proposed controller yields the lowest tracking errors in payload position, cable direction, and cable length, and achieves shorter convergence times for these quantities.
The attitude RMSE is comparable to that of \cite{zeng2019geometric} and LQR, and lower than \cite{yu2023adaptive}. The average step time is comparable to LQR and \cite{zeng2019geometric}; although \cite{yu2023adaptive} is computationally lighter, it exhibits markedly larger errors and does not meet the predefined convergence criterion.
}
\begin{table}[h]
\color{blue} 
\renewcommand{\arraystretch}{1.0}
\caption{Percentage reductions in RMSE and mean values for $\|\bm e_x\|$,  $\|\bm e_q\|$, $|e_L|$ achieved by the proposed method relative to the comparison methods in Simulation 3.}
\centering
\label{table:quantitative_reduction}
\resizebox{\columnwidth}{!}{
\begin{tabular}{lcccccc}
\toprule
\multirow{2}{*}{Comparison Method} &
\multicolumn{3}{c}{reduction ratio on RMSE} &
\multicolumn{3}{c}{reduction ratio on mean} \\
\cmidrule(lr){2-4}\cmidrule(lr){5-7}
 & $\|\bm e_x\|\,[\mathrm{m}]$ & $\|\bm e_q\|\,[{-}]$  & $e_L\,[\mathrm{m}]$
 & $\|\bm e_x\|\,[\mathrm{m}]$ & $\|\bm e_q\|\,[{-}]$  & $e_L\,[\mathrm{m}]$ \\
\midrule
$\mathrm{Geometric\ Controller}$   & 15.6\% & 86.3\%  & 36.4\%  & 33.7\% & 88.5\% & 40.6\% \\
$\mathrm{LQR\ Controller}$   & 14.4\% & 60.2\% & 35.0\% & 30.9\% & 58.9\% & 41.1\% \\
$\mathrm{Energy\!\!-\!\!based\ Controller}$     & 70.5\% & 93.7\% & 66.2\% & 90.3\% & 98.0\% & 84.1\% \\
\bottomrule
\end{tabular}
}
\end{table}
\begin{figure}[t]
    \centering
    \vspace{-0.2cm}
    \subfloat[The tracking errors of the payload position, cable direction, multirotor attitude, and cable length.]{   
        \includegraphics[width=3.2in]{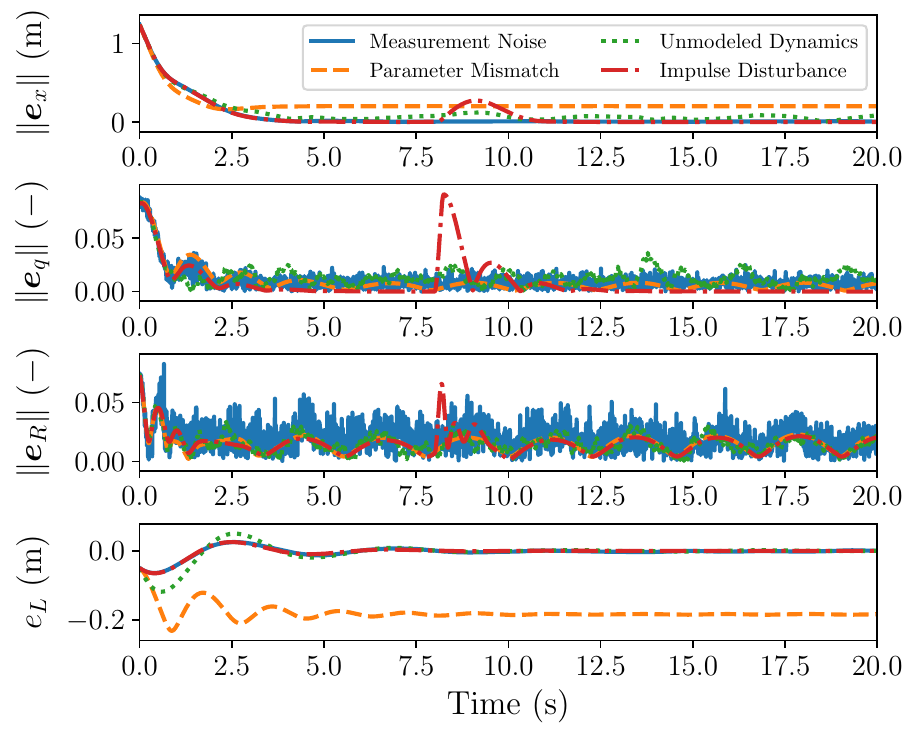}
    }\\ \vspace{-0.2cm}
    \subfloat[The control inputs.]{
    \includegraphics[width=3.2in]{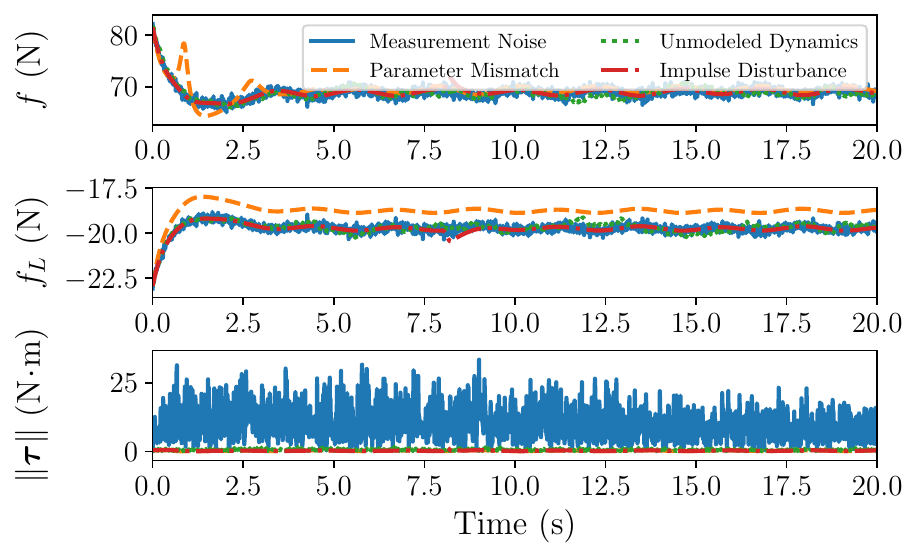}
    }
    \caption{\label{fig:Sim4} Results for Simulation 4.}
\end{figure}
\textcolor{blue}{\subsection{Simulation 4: Robustness Validation}\label{sec:sim4}
To test the robustness of the proposed method, four cases of uncertainty/disturbance are injected to the system: (i) \emph{Measurement noise:} zero-mean Gaussian noise on payload position, velocity, cable length, and cable-length variation rate with standard deviation (std.) \(0.02\) (units: m, m/s, m, m/s, respectively); multirotor angular velocity noise with std.\ \(0.01~\mathrm{rad/s}\) per axis; and a cable-direction perturbation given by a \(0.01~\mathrm{rad}\) axis--angle rotation that preserves \(\|\bm q\|=1\). (ii) \emph{Parameter mismatch:} \(m_Q=1.03\,m_{Q\mathrm{nom}}\), \(m_L=0.95\,m_{L\mathrm{nom}}\), and \(J=\mathrm{diag}(1.10,\,0.95,\,1.05)\,J_{\mathrm{nom}}\). (iii) \emph{Unmodeled dynamics:} an actuator lag of \(50~\mathrm{ms}\) together with a stochastic wind acceleration acting on the payload as zero-mean Gaussian noise with std.\ \([0.6,\,0.6,\,0.6]^\top~\mathrm{m/s^2}\).
(iv) \emph{Impulse disturbance:} a half-sine force in the world frame at \(t=8~\mathrm{s}\) lasting \(0.25~\mathrm{s}\) with amplitude \([5,\,1,\,-2]^\top~\mathrm{N}\). The reference trajectory and initial state are the same as in Simulation 3. Fig. \ref{fig:Sim4} presents the result curves, showing that the closed-loop system remains stable with bounded tracking errors across all four scenarios, demonstrating the satisfactory robustness of the proposed control scheme.
}
\vspace{-0.5cm}

\textcolor{blue}{\subsection{Discussion of Practical Implementation}
Under the proposed control framework, once a desired payload trajectory is specified, the cable-length generator produces a $C^5$-continuous cable-length trajectory. The commanded quantities: the cable direction $\bm q_d$, cable angular velocity $\bm \omega_d$, rotation matrix $R_d$, multirotor angular velocity $\bm \Omega_d$, and multirotor angular acceleration $\dot{\bm \Omega}_d$, are computed from the closed-loop system dynamics via systematic derivations. Consequently, the controllers in \eqref{Control:Fq}, \eqref{Control:fc}, \eqref{Control:Fcperp}, and \eqref{Control:tau} are algebraic state-feedback laws that rely only on directly measured signals; they require neither acceleration inputs nor higher-order state derivatives, thereby avoiding numerical differentiation of noisy measurements.
For sensing, the rotation matrix $R$ and angular velocity $\bm \Omega$ can be measured by the onboard IMU. Indoors, the multirotor/payload positions $\bm x_Q, \bm x_L$ and velocities $\bm v_Q, \bm v_L$ can be obtained from a motion-capture system, from which the cable length $L$, its rate $\dot L$, the cable direction $\bm q$, and its angular velocity $\bm \omega$ are computed geometrically. Outdoors, $\bm x_Q$ and $\bm v_Q$ come from GPS or visual inertial odometry, etc.; a winch-mounted encoder measures $L$ and $\dot L$; and an IMU at the cable end provides $\bm q$ and $\bm \omega$. The payload states $\bm x_L,\bm v_L$ are then inferred from the known multirotor-payload geometry. 
Unlike fixed-length systems, an electric winch driven by a current-controlled motor can be used to adjust the cable length. Given the torque-current relation $\tau_m = k_m I_m$ and the drum radius $r_s$, the cable tension satisfies $f_L = \frac{\tau_m}{r_s}$, implying $I_m = \frac{r_s}{k_m}f_L$, where $k_m$ denotes the motor torque constant. 
Accordingly, the controller commands the winch-motor current $I_m$ to achieve the commanded cable tension during hoisting and lowering. 
Finally, the vehicle-level inputs $(f,\bm\tau)$ are mapped to rotor speeds via the mixer, and the hoisting/lowering command $f_L$ is sent to the winch motor \cite{huang2023suppressing, liang2022unmanned, yu2023adaptive}.}
\vspace{-0.2cm}

\section{Conclusions}\label{sec:conclusions}
This paper explores a payload trajectory tracking scheme for the aerial transportation system with variable-length cable. The proposed scheme includes a backstepping controller for payload trajectory, cable length and direction tracking, as well as an online cable length generator. The generator integrates the backstepping controller into the dynamic system model constraints and optimizes the desired cable length in real-time for the control scheme to follow. By employing Lyapunov techniques and growth restriction conditions, the proposed control scheme ensures the asymptotic stability of the closed-loop system. Under a reasonable cost function, the cable length generator effectively allocates the multirotor's motion and cable length, facilitating consistent payload trajectory tracking across various configurations of multirotor positions and cable lengths. Simulation results further highlight the advantages of the proposed control scheme, demonstrating its effectiveness in both trajectory tracking and cable length generation. 
\textcolor{blue}{In future work, we will investigate novel control schemes to strengthen the theoretical robustness guarantees of variable-length cable-suspended aerial transportation systems.}

\appendix
\renewcommand{\appendixname}{Appendix}
\section{Calculation of $\dot{\boldsymbol{e}}_q^\top \boldsymbol{e}_\omega $}\label{App:A}
To calculate term $\dot{\bm e}_q^\top \bm e_\omega$, by using the fact $\bm \omega^\top \bm q= 0 $, $ \bm \omega_d^\top \bm q_d =0 $, one first derives the derivation of $\dot{\bm e}_q$ as follows:
\begin{align*}
    \dot{\bm e}_q 
    =&- \bm q\times(\bm \omega_d \times \bm q_d) + \bm q_d\times(\bm \omega\times\bm q)\nonumber\\
    =&-\bm \omega_d (\bm q^\top \bm q_d) + \bm q_d (\bm q^\top \bm \omega_d) + \bm \omega(\bm q_d^\top\bm q) -\bm q(\bm q_d^\top\bm\omega)\nonumber\\
    =&(\bm \omega +\hat{\bm q}^2\omega_d - \hat{\bm q}^2\omega_d- \bm \omega_d)(\bm q^\top\bm q_d)\nonumber\\
    & + \bm q_d (\bm q^\top \bm \omega_d) -\bm q (\bm \omega_d^\top \bm q_d) + \bm q_d (\bm \omega^\top \bm q)-\bm q(\bm q_d^\top\bm\omega)\nonumber\\
    =&\bm e_\omega(\bm q^\top\bm q_d) -(\bm q^\top \bm \omega_d)(\bm q^\top\bm q_d)\bm q\nonumber\\
    &+ \bm \omega_d \times (\bm q_d \times \bm q) + \bm \omega\times(\bm q_d\times \bm q)\nonumber\\
    =&\bm  e_\omega(\bm q^\top\bm q_d) -(\bm q^\top \bm \omega_d)(\bm q^\top\bm q_d)\bm q + \bm e_\omega \times \bm e_q \nonumber\\
    &+ (\bm \omega_d -\hat{\bm q}^2\bm \omega_d)\times \bm e_q.
\end{align*}
Subsequently, the derivation of $\dot{\bm e}_q^\top \bm e_\omega$ can be obtained as follows:
\begin{align*}
    \dot{\bm e}_q^\top \bm e_\omega 
 =&(\bm q^\top\bm q_d) \bm e_\omega^\top \bm e_\omega+(2\bm \omega_d -\bm \omega_d -\hat{\bm q}^2\bm \omega_d)\times \bm e_q^\top\bm e_\omega\nonumber\\
 =&(\bm q^\top\bm q_d) \bm e_\omega^\top \bm e_\omega+(2\bm \omega_d -(\bm q^\top\bm \omega_d)\bm q)\times \bm e_q^\top\bm e_\omega\nonumber\\
=&(\bm q^\top\bm q_d) \bm e_\omega^\top \bm e_\omega+((2I_{3\times3}-\bm q \bm q^\top)\bm \omega_d)\times \bm e_q^\top\bm e_\omega\nonumber\\
\leq &\|\bm e_\omega\|^2 + C_\omega\|\bm e_q\|\|\bm e_\omega\|,
\end{align*}
where $C_\omega = \sup \left(\left\|(2I_{3\times3}-\bm q \bm q^\top)\bm \omega_d\right\|\right)$.

\bibliographystyle{plain}        
\bibliography{autosam}           

\end{document}